\documentclass{article}
\usepackage{graphicx,amssymb,amsmath}

\usepackage{color}
\usepackage{amsfonts}
\usepackage{latexsym}
\usepackage{fullpage}

\usepackage[usenames,dvipsnames]{xcolor}
\usepackage[colorlinks,citecolor=blue,linkcolor=BrickRed]{hyperref}

\usepackage[ruled]{algorithm}
\usepackage[noend]{algpseudocode}

\usepackage{comment}
\usepackage{graphicx,tipa}

\newcommand{\ignore}[1]{}

\usepackage{enumitem}

\usepackage[section]{placeins}

\usepackage{subfig}
\usepackage{setspace}

\newcommand{\sinn}[1]{\sin \left({#1}\right)}

\newcommand{\reals}{\mathbb{R}}

\algdef{SE}[DOWHILE]{Do}{doWhile}{\algorithmicdo}[1]{\algorithmicwhile\ #1}%

\def\rb{{\rho_\beta}}
\def\crb{{\lceil \rb \rceil}}

\newcommand{\alg}[1]{#1-Jump Algorithm}


\newtheorem{lemma}{Lemma}
\newtheorem{theorem}{Theorem}

\newenvironment{proof}{\noindent{\bf Proof.}}{\hfill \qed \vskip 5pt}
\def\qed{\hfill\rule{2mm}{2mm}}

\begin{document}

\title{Searching with Advice: Robot Fence-Jumping\footnote{This is the full version of the paper with the same title which will appear in the proceedings of the 28th CCCG (Canadian Conference on Computational Geometry), August 3-5, 2016, Vancouver.}
}

\author{Kostantinos Georgiou\thanks{Research supported in part by NSERC.}
\thanks{Department of Mathematics,
	Ryerson University,{\tt  konstantinos@ryerson.ca}
	}
\and
Evangelos Kranakis$^\dag$\thanks{School of Computer Science, 
	Carleton University,{\tt kranakis@scs.carleton.ca}}
\and
Alexandra Steau\thanks{School of Computer Science, 
	Carleton University,{\tt AlexandraSteau@cmail.carleton.ca}}
}

\index{Georgiou, Kostantinos}
\index{Kranakis, Evangelos}
\index{Steau, Alexandra}

\thispagestyle{empty}
\maketitle

\begin{abstract}
We study a new search problem on the plane involving a robot and an immobile treasure, initially placed at distance $1$ from each other. The length $\beta$ of an arc (a fence) within the perimeter of the corresponding circle, as well as the promise that the treasure is outside the fence, is given as part of the input. The goal is to device movement trajectories so that the robot locates the treasure in minimum time. Notably, although the presence of the fence limits searching uncertainty, the location of the fence is unknown, and in the worst case analysis is determined adversarially. Nevertheless, the robot has the ability to move in the interior of the circle. In particular the robot can attempt a number of chord-jump moves if it happens to be within the fence or if an endpoint of the fence is discovered. 

The optimal solution to our question can be obtained as a solution to a complicated optimization problem, which involves trigonometric functions, and trigonometric equations that do not admit closed form solutions. For the \alg{1}, we fully describe the optimal trajectory, and provide an analysis of the associated cost as a function of $\beta$. Our analysis indicates that the optimal \alg{k} requires that the robot has enough memory and computation power to compute the optimal chord-jumps. Motivated by this, we give an abstract performance analysis for every \alg{k}. Subsequently, we present a highly efficient Halving Heuristic \alg{k} that can effectively approximate the optimal \alg{k}, with very limited memory and computation requirements.

\vspace{0.5cm}
\noindent
{\bf Key words and phrases.}
Disk,
Fence,
Optimization,
Robot,
Search,
Speed,
Treasure.
\end{abstract}


\section{Introduction}

Geometric search is concerned with finding a target placed in a geometric region and has been investigated in many areas of mathematics, theoretical computer science, and robotics.  In each instance one aims to provide search algorithms that optimize a certain cost, which may take into account a variety of important characteristics and features of the domain, computational abilities of the searcher, assumptions about the target, etc. In this paper, we introduce and study {\em fence-jumping search}, a new search problem involving a robot, an unknown stationary fence (barrier), and an unknown stationary target (or treasure) in the plane. 

The location of the treasure is unknown to the robot. However, it has knowledge that at the start it is located at a distance of $1$ (unit) away from the treasure. Equivalently, the treasure is stationed on the perimeter of a disk (within the known environment), which is centered at the start point of the robot. A {\em fence}, a given circular arc of  length $\beta$, is placed on the perimeter of the disk, whose location is also unknown to the robot. Further, the robot has the knowledge that the treasure is located on the perimeter but not on the fence.  Depending on its trajectory, the robot may move along the perimeter of the circle and occasionally, say when within the fence, it may
want to move along a chord, or as we say to {\em fence-jump}, so as to reduce the time necessary to perform the search. We will analyze several fence-jumping algorithms that will allow us to reach the treasure in minimal time.

We study the fence-jumping search problem for one robot starting at the center of the disk and moving at a constant speed $1$. We assume the treasure is stationary and placed by an adversary at the beginning of each round depending on the fence location. The adversary positions the treasure on the perimeter, but outside the fence. The robot may move anywhere on the disk in an attempt to find this treasure; it is also able to use geometric knowledge so as to decrease the amount of time spent during the search. That is to say, since the robot knows that the treasure is not located on the fence, it could try to bypass it by ``jumping over the fence''. Goal of this paper is to determine a trajectory so that the robot finds the treasure in optimal time.

\subsection{Related Work}


The type of search problem investigated in our work was first seen sixty years ago when Beck~\cite{beck1964linear}  and Bellman~\cite{bellman1963optimal} asked an important, yet simplistic question tied to the minimization of distance.
Motivated from this, several different natural search problems
have been studied including the use of a fixed~\cite{czyzowicz2014evacuating,koopman1957theory} or mobile target~\cite{stone1974search}, the tools searchers have access to, the number of searchers, the communication restrictions and many more. Often, the essential part of the robot activity is the recognition and/or mapping of the terrain. In the case of a known structure, the main objective of the search is to minimize the time to find the treasure. Searching for a motionless target has been studied in the cow-path problem~\cite{beck1964linear}, lost in a forest problem~\cite{finch2004lost,isbell1957optimal} and plane searching problem~\cite{baeza1995parallel,baezayates1993searching}.  

Baeza-Yates \emph{et al} in their well known paper~\cite{baezayates1993searching} study the worst-case time for search involving one robot and a treasure at an unknown location in the plane, such as a simple line. 
Useful surveys on search theory can also be found 
in~\cite{benkoski1991survey}~and~\cite{dobbie1968survey}.

Search by multiple robots with communication capabilities has been considered in~\cite{dobrev2001mobile,hoffmann2001polygon}, while \cite{czyzowicz2014evacuating,czyzowicz2015evacuating} study the evacuation of $k$ robots searching for an exit  located on the perimeter of a disk. 
The problem of finding trajectories for obstacle avoidance in both known and unknown terrains has been considered in several papers including  \cite{badal1994practical,blum1991navigating,pozna2009design}.


\subsection{Outline and Results of the Paper}

As a main objective, our approach will have to design algorithms for finding the treasure in optimal time, while adapting to the fence structure located on the perimeter. Thus, leading us to propose algorithms that attempt to deliver the optimal shortcuts necessary to exit and/or avoid the fence structure. 

An outline of our results is as follows. In Section~\ref{sec:0-jump}, we introduce the basic concepts and analyze a simple search algorithm for finding the treasure without involving any jumps. In Section~\ref{sec:1-jump}, we introduce and analyze the optimal $1$-Jump search algorithm. Meanwhile, in Section~\ref{sec:k-jump}, we propose a generic description of $k$-Jump algorithms. In Section~\ref{sec: halving algo}, we study a $k$-Jump algorithm based on a {\em halving} heuristic, which approximates the optimal jump without relying on solutions of trigonometric optimization problems. In Section~\ref{sec: comparison}, we contrast the choices and performance of the Halving \alg{k} with the choices and performance of the Optimal \alg{k} that was obtained using optimization software packages, for $k\leq 3$. 
We conclude with Section~\ref{sec:conclusion}. 



\section{Preliminary Observations}\label{sec: Prel}
\label{sec:0-jump}

First we introduce the basic concepts and assumptions of our model. Initially, we make the assumption that the robot is located in the center of the disk with a radius of $1$, and a treasure is located at distance of $1$ from the robot, on the perimeter of the disk. We define this treasure to be a point on the disk and, thus, does not take any space on the perimeter. The treasure location is always unknown to the robot until it moves directly over its point location. That is to say, the robot has no vision capabilities, in that it becomes aware of what each point on the circle is, i.e. a fence point, treasure or nothing special, only if the point is visited. The robot moves at the same speed throughout its search on the disk and the movement of the robot from the center always takes $1$ unit of time. The robot has the computational power to numerically solve trigonometric equations through the use of deterministic processors which possess the required memory for these processes. 
 
Recall that goal of the robot is to optimize the length of its trajectory using various types of movements, i.e. the robot may walk on the fence or even jump over the fence moving along a chord (within the interior of the circle).

To begin we provide a naive solution to our treasure finding problem, which we will then improve with a number of algorithms. In what follows, we denote the length of the fence by $\beta$, given as part of the input. Independently of the algorithm considered, any deterministic algorithm will first have the robot move to an arbitrary point on the perimeter of the circle, thereafter referred to as the \textit{basic landing point}, with the intention that the robot will start moving/searching the circle counterclockwise, which is further examined in Algorithm~\ref{alg:0jumpa}. 

\begin{algorithm}[H]
\caption{0-Jump Algorithm}\label{alg:0jumpa}
\begin{algorithmic}[1]
\State  {Walk to the perimeter of the disk}
\State {Continue walking on the perimeter counterclockwise}
\If {you reach fence endpoint}
\State {Jump along the corresponding chord of length}
\State {$2\sinn{\beta/2}$}
\Else
\State {Walk on perimeter until you find treasure}
\EndIf
\end{algorithmic}
\end{algorithm}

Our work provides a focus on algorithms that perform well under worst case analysis. As such, the performance of any algorithm will be determined after an adversary decides on the location of both the basic landing point, the fence itself, and the treasure. For the sake of exposition, we now present the worst case termination time depending on the location of the basic landing point.

\begin{lemma}\label{lem: 0-jump performance}
The worst case termination time of Algorithm~\ref{alg:0jumpa} is\footnote{The usefulness of notation $c^i_j$ for the cost of the algorithm will be transparent in the next sections}
$$
\left\{
\begin{array}{ll}
c^0_0~:=~1+2\pi -\beta+2\sinn{\beta/2}&~\mbox{Figure~\ref{fig:land1}}  \\
c^0_1~:=~1+2\pi&~\mbox{Figure~\ref{fig:land2}}
\end{array}
\right.
$$
where the reference in the right column indicates the Figure which applies to the case.

\begin{figure}[!htb]
\centering
\subfloat[Landing outside fence.]{\label{fig:land1}{\includegraphics[height=6cm]{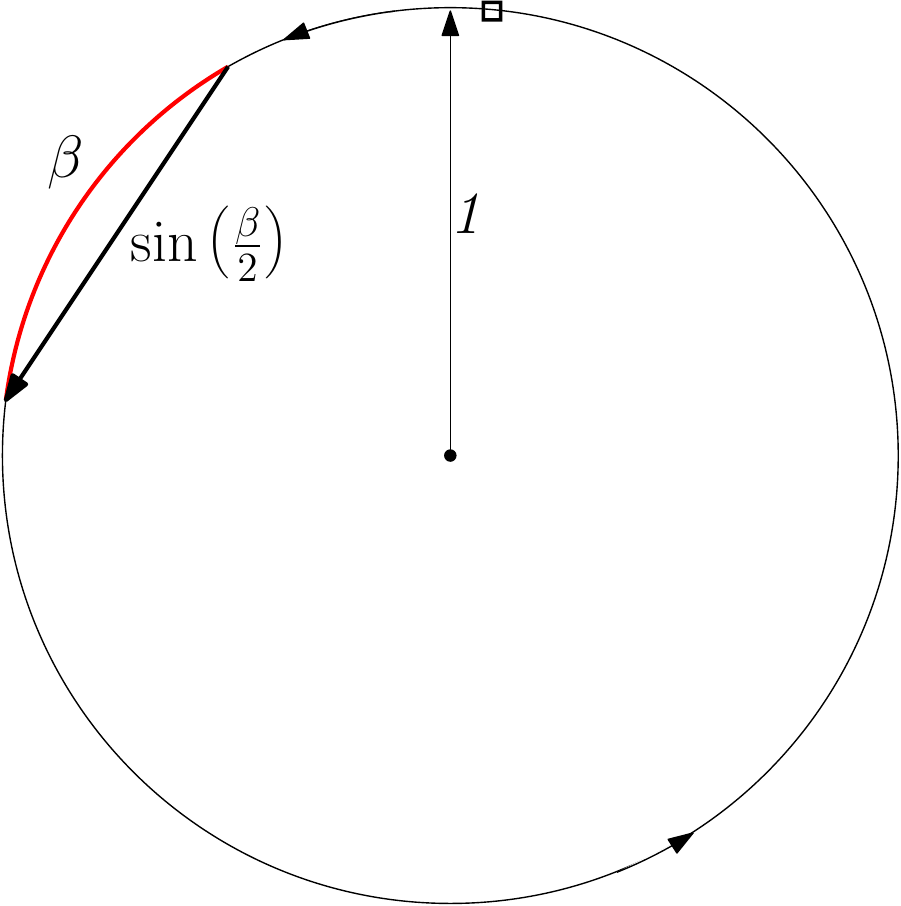}}}
\hspace{1em}
\subfloat[Landing within fence.]{\label{fig:land2}{\includegraphics[height=6cm]{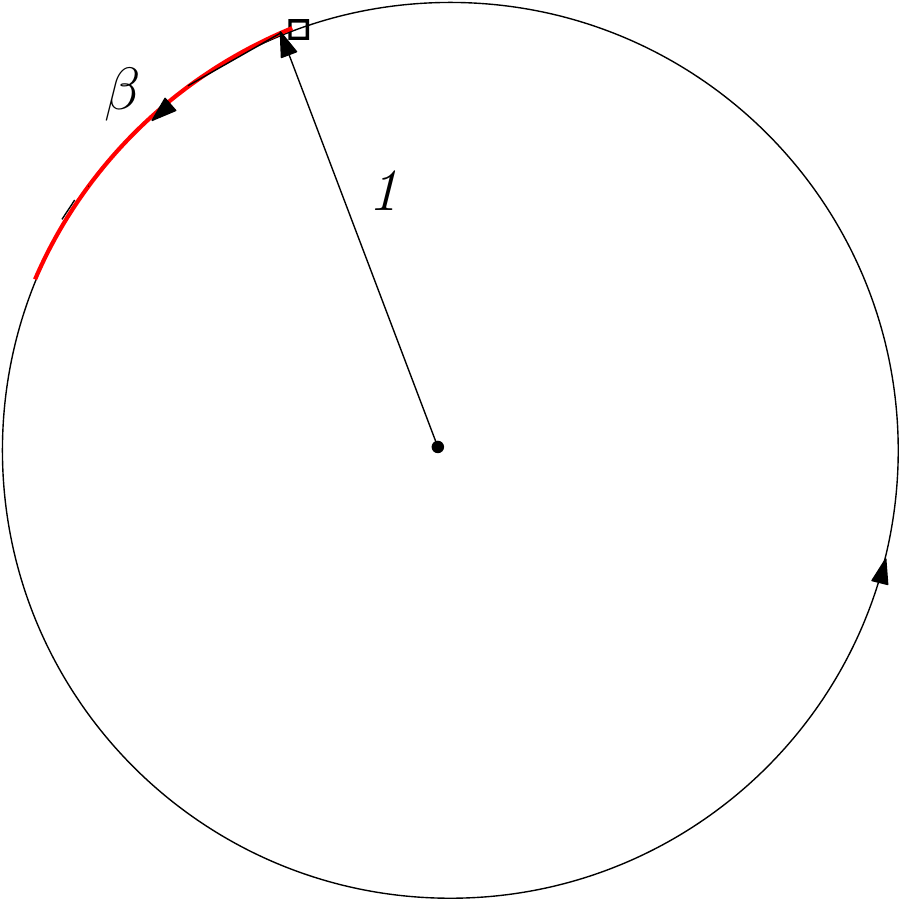}}}
\caption{Basic landing point.}
\label{fig:caseshave}
\end{figure}
\end{lemma}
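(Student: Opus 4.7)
The plan is to fix the adversary's two degrees of freedom (basic landing point and treasure position) and, in each of the two cases of landing relative to the fence, compute the maximum time the robot's trajectory can take. Both cases share a fixed cost of $1$ for Line~1 (walking from the center to the perimeter), so I would isolate this once and then reason purely about the arc distances and the chord jump.

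For Case 1 (Figure~\ref{fig:land1}), the landing point lies on the non-fence part of the circle, and the robot proceeds counterclockwise along the perimeter. I would argue that the adversary's best move is to place the treasure just clockwise of the landing point, forcing the robot to traverse essentially the full non-fence arc. En route, the robot necessarily encounters the clockwise endpoint of the fence first, triggering the chord jump of length $2\sinn{\beta/2}$, lands at the counterclockwise endpoint, and then continues along the remaining non-fence arc to the treasure. Summing the non-fence arc length $2\pi-\beta$, the chord $2\sinn{\beta/2}$, and the initial cost $1$ gives $c^0_0$.

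For Case 2 (Figure~\ref{fig:land2}), the landing point is inside the fence, so the counterclockwise walk first travels along the fence arc itself. Here I would take the adversary to place the landing at (or arbitrarily close to) the clockwise endpoint so that the robot must walk across the entire fence of length $\beta$, and then to place the treasure just clockwise of that endpoint on the non-fence side, forcing another $2\pi-\beta$ of arc travel after the fence exit. Adding the initial $1$, these two arcs sum to $2\pi$, yielding $c^0_1 = 1 + 2\pi$.

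The step I expect to require the most care is Case 2: I need to explain precisely why the chord jump at the counterclockwise endpoint (the first endpoint the robot reaches from within the fence) does \emph{not} reduce the worst-case cost below $1+2\pi$. The point is that jumping back to the clockwise endpoint and then resuming the counterclockwise walk forces the robot to re-traverse the fence arc from the opposite side to reach the unvisited portion of the non-fence perimeter, so the total is no better than simply continuing around the circle without jumping. Once this is justified, the two expressions $c^0_0$ and $c^0_1$ follow directly by summing the traversed segments.
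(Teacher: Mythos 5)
Your proposal is correct and follows essentially the same approach as the paper: in each case the adversary places the treasure just clockwise of the landing point (resp.\ just outside the clockwise end of the fence), and the cost is obtained by summing the initial unit radius, the traversed arcs, and (in the first case) the chord $2\sinn{\beta/2}$. Your extra remark on why no useful jump occurs when exiting the fence in Case~2 is a reasonable clarification that the paper leaves implicit.
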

\begin{proof} 
Suppose that the basic landing point is outside the fence, as seen in Figure~\ref{fig:land1}, and say that the clockwise distance between the landing point and the fence is $x\in (0, 2\pi-\beta)$. It is straightforward that the adversary would place the treasure clockwise inbetween the landing point and the fence, at clockwise distance $y\in (0,x)$ from the landing point. Then, for all $x \in (0, 2\pi-\beta)$ the cost of the algorithm would be  
$$
\sup_{y\in (0,x)}\{1+2\pi -\beta+2\sinn{\beta/2}-y\}=1+2\pi -\beta+2\sinn{\beta/2}.
$$

In the other case, the landing point is within the fence, as illustrated in Figure~\ref{fig:land2}. Suppose that the clockwise distance between the landing point and the endpoint of the fence is $x\in (0,\beta)$. Also suppose that the clockwise distance between the same fence endpoint and the treasure is $y\in (0,2\pi-\beta)$. Then, the robot will locate the treasure in time 
$$\sup_{x,y}\{1+2\pi -x-y\}=1+2\pi.$$
This proves Lemma~\ref{lem: 0-jump performance}.
\end{proof}

It is intuitive that having the basic landing point outside the fence is a ``favorable event'' in that for all $\beta$, $c^0_0\leq c^0_1$. This follows formally from the fact that the non-negative expression $\beta-2\sinn{\beta/2}$ is increasing in $\beta>0$. Hence, the performance of Algorithm~\ref{alg:0jumpa} is $1+2\pi$.

Next, we focus on algorithms that can address the choice of the adversary placing (basic) landing points within the fence. In such algorithms the robot will try to jump in an attempt to land outside the fence. 

\FloatBarrier

\section{The Optimal 1-Jump Algorithm}\label{sec: opt 1-jump}
\label{sec:1-jump}

In this section we analyze the optimal \alg{1}, which also serves as a warm-up for the analysis of the generic k-Jump Algorithm. \alg{1}s are fully determined by the (unique) chord jump of corresponding arc-length $\alpha$ they make in case the basic landing point (of the robot) is within the fence. 

It is worthwhile discussing the required specifications for the algorithm to be correct. First, we require the jump to be in ``counter-clockwise'' direction,
i.e. that $\alpha \leq \pi$ (this also breaks the symmetry for the adversarial placements of the fence and the treasure). Second, we further require that the chord jump does not pass over the area that could hold the treasure, landing back to the fence. For that, it is of importance that $\alpha \leq 2\pi -\beta$. To summarize, the 1-Jump Algorithm is fully determined by choosing $\alpha$ satisfying 
$$
0< \alpha \leq \min \{ \pi, 2\pi - \beta\}.
$$
To resume, Algorithm~\ref{alg:o1Jump} with parameter $\alpha$ runs similarly to Algorithm~\ref{alg:0jumpa}, except from the case that the last landing of Algorithm~\ref{alg:0jumpa} (which happens to be the basic one) is within the fence. If that happens, Algorithm~\ref{alg:o1Jump} makes a counterclockwise jump corresponding to arc length $\alpha$. If the 1st-jump landing point is in the fence, then it runs Algorithm~\ref{alg:0jumpa}. Otherwise, the 1st-jump landing point is outside the fence and the robot applies the following \textit{remedy phase}: move clockwise along the periphery of the circle till the endpoint of the fence is found, say at arc distance $x$, and then return to the 1st-jump landing point along the corresponding chord of length $2\sinn{x/2}$, and continue executing Algorithm~\ref{alg:0jumpa}. 

\begin{figure}[!htb]
\centering
\subfloat[1st jump landing is outside fence.]{\label{fig:optalp1}{\includegraphics[height=6cm]{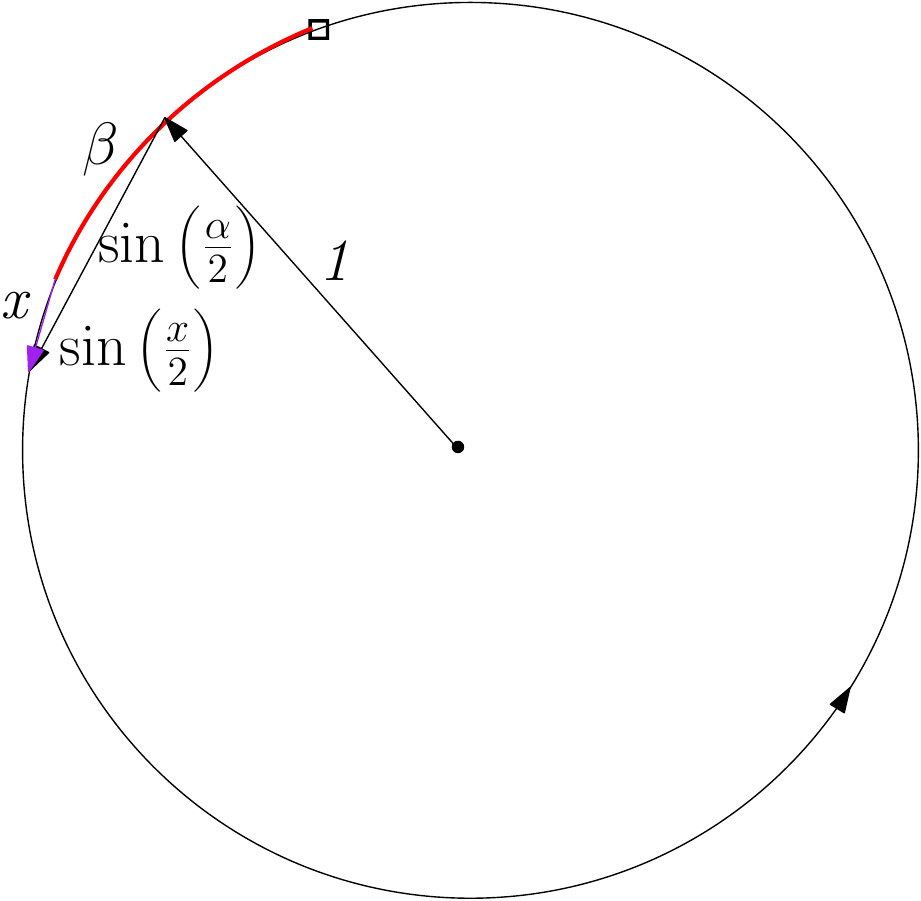}}}
\hspace{1em}
\subfloat[1st jump landing is inside fence.]{\label{fig:optalp2}{\includegraphics[height=6cm]{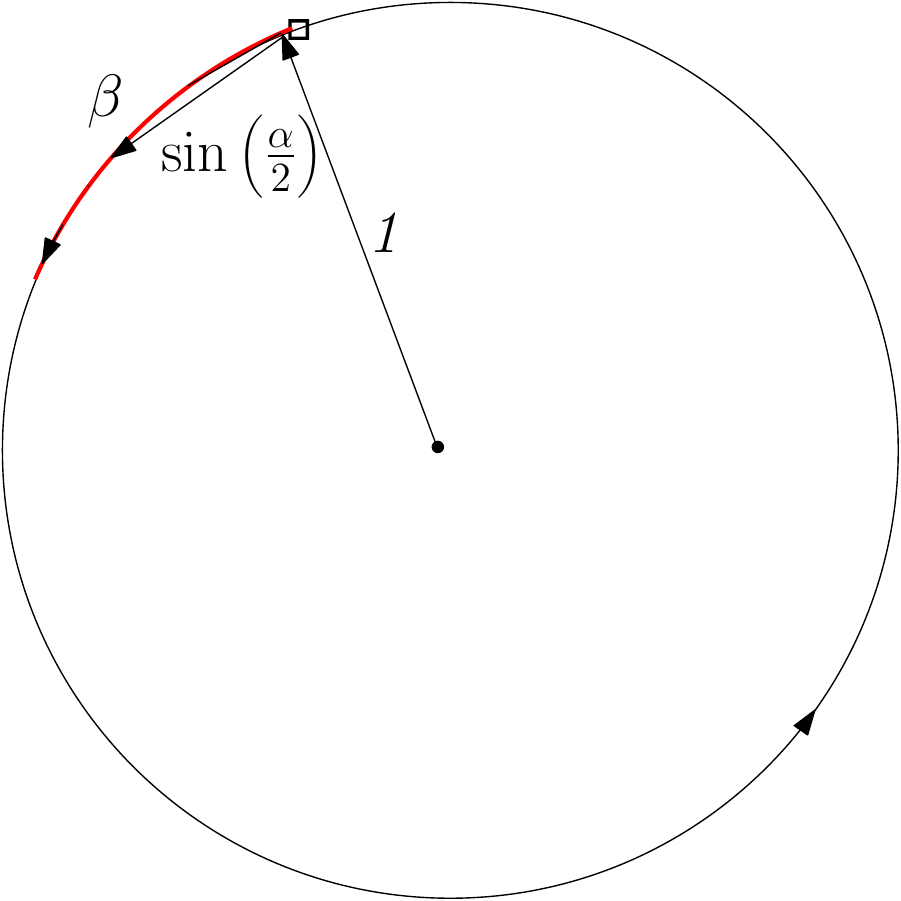}}}
\caption{\alg{1} basic landing point.}
\label{fig:caseshave1}
\end{figure}

\begin{algorithm} [H]
\caption{1-Jump}\label{alg:o1Jump}
\begin{algorithmic}[1]
\State  {Walk to the perimeter of the disk}
\If {your landing point is inside the fence}
\State {make a ccw chord jump of arc length $\alpha$}
\EndIf
\State {Perform Algorithm~\ref{alg:0jumpa}}
\end{algorithmic}
\end{algorithm} 

\begin{lemma}\label{lem: cost of 1-jump alg}
Depending on the landing points, the cost of Algorithm~\ref{alg:o1Jump} with parameter $\alpha$ is 
\begin{equation}
\left\{
\begin{array}{ll}
c^1_0~:=~ 1+2\pi - \beta + 2\sinn{\beta/2}& \mbox{Figure~\ref{fig:land1}}  \\
c^1_1~:=~ c^1_0 + 4 \sinn{\alpha/2} - 2\sinn{\beta/2} & \mbox{Figure~\ref{fig:optalp1}}  \\
c^1_2~:=~ 1+2\pi - \left(\alpha- 2\sinn{\alpha/2} \right) & \mbox{Figure~\ref{fig:optalp2}} 
\end{array}
\right.
\end{equation}
with the understanding that $c^1_0, c^1_1, c^1_2$ are functions on $\beta$ and $\alpha$,
and the reference in the right column indicates the Figure which applies to the case.
\end{lemma}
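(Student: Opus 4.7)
My plan is to dispatch the three cases one by one, following the branching structure of Algorithm~\ref{alg:o1Jump}. The case $c^1_0$ is essentially immediate: when the basic landing point lies outside the fence, the conditional of Algorithm~\ref{alg:o1Jump} does not fire and the execution reduces to that of Algorithm~\ref{alg:0jumpa} on exactly the configuration handled in the first case of Lemma~\ref{lem: 0-jump performance}, yielding $c^1_0 = 1 + 2\pi - \beta + 2\sin(\beta/2)$ without further work. For the remaining two cases I would fix coordinates by placing the basic landing point at angle $0$ and parameterizing the adversary's fence choice as the counterclockwise arc $[-a,\beta-a]$ with $a\in[0,\beta]$; the first-jump landing is then at angle $\alpha$, and whether it lies inside or outside the fence is decided by comparing $\alpha$ with $\beta-a$.

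For $c^1_1$ (the regime $\alpha > \beta - a$ of Figure~\ref{fig:optalp1}) I would add up the trajectory piecewise: $1$ for the walk to the perimeter, $2\sin(\alpha/2)$ for the first chord jump to $\alpha$, $\alpha-\beta+a$ for the remedy phase's clockwise walk to the newly discovered fence endpoint at $\beta-a$, $2\sin((\alpha-\beta+a)/2)$ for the chord back, and finally a trailing counterclockwise sweep. The adversary's best treasure placement sits just before the unseen fence endpoint at $2\pi-a$, contributing an additional $2\pi - a - \alpha$ of arc (crucially, the jump-over chord of length $2\sin(\beta/2)$ is never actually invoked because the robot terminates before reaching it). Summing and simplifying gives $1 + 2\pi - \beta + 2\sin(\alpha/2) + 2\sin((\alpha-\beta+a)/2)$, which is monotonically increasing in $a$ on the admissible range; the adversary therefore selects $a=\beta$, collapsing the remedy-chord term to $2\sin(\alpha/2)$ and yielding $c^1_1 = 1 + 2\pi - \beta + 4\sin(\alpha/2)$, which matches the claimed $c^1_0 + 4\sin(\alpha/2) - 2\sin(\beta/2)$.

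For $c^1_2$ (the regime $\alpha \le \beta - a$ of Figure~\ref{fig:optalp2}) the first-jump landing lies inside the fence, so after the chord of length $2\sin(\alpha/2)$ the robot simply walks counterclockwise, exits the fence at $\beta - a$ after arc $\beta - a - \alpha$, and then sweeps the outside-fence arc. With the treasure placed just before the far fence endpoint at $2\pi-a$, the total on-perimeter walk following the jump is $2\pi - \alpha - a$, for a total cost of $1 + 2\sin(\alpha/2) + 2\pi - \alpha - a$; this is decreasing in $a$, so the adversary picks $a = 0$, giving $c^1_2 = 1 + 2\pi - (\alpha - 2\sin(\alpha/2))$ as claimed. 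The one subtlety requiring care across both remaining cases is the bookkeeping for fence-endpoint crossings during the residual Algorithm~\ref{alg:0jumpa}: in the adversarial worst-case trajectory the treasure is positioned strictly before the second (unseen) fence endpoint, which is precisely what keeps the $2\sin(\beta/2)$ chord out of the final formulas and makes $c^1_1$ and $c^1_2$ depend on $\beta$ only through the terms written down.
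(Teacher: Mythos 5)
Your proposal is correct and follows essentially the same route as the paper's proof: case analysis on the three landing configurations, summing the trajectory pieces, and letting the adversary push the fence offset (your $a$, the paper's $x=\alpha-\beta+a$) and the treasure to the extremes via monotonicity. Your explicit coordinatization and the observation that the $2\sinn{\beta/2}$ jump-over chord is never triggered make the worst-case identification slightly more rigorous than the paper's verbal argument, but the substance is identical.
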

\begin{proof} 
Clearly, if the basic landing point is in the fence, then the cost of Algorithm~\ref{alg:o1Jump} $c^1_0$ is equal to cost $c^0_0$ of the Algorithm~\ref{alg:0jumpa} (for the same case). 

Suppose now that the basic landing point is in the fence. Algorithm~\ref{alg:o1Jump} performs a counterclockwise chord jump of length $2\sinn{\alpha/2}$. We examine two more subcases. In the first subcase, the 1st-jump landing point is outside the fence as seen in Figure~\ref{fig:optalp1}, say at clockwise distance $x\in (0,\alpha)$ from the fence. Then the robot follows the remedy phase spending $x+2\sinn{x/2}$ more time to come back to the same landing point. Clearly, the worst positioning of the treasure is to be arbitrarily clockwise close to the fence. That would make the robot search for an additional time of $2\pi - \beta - x$ for a total of 
$$1+x+2\sinn{x/2}+2\pi - \beta - x = 1+2\sinn{x/2}+2\pi - \beta.$$
Since $x\leq \alpha \leq \min \{ \pi, 2\pi - \beta\}$ and by the monotonicity of $\sinn{x/2}$ we see, as promised, that the cost in that case is no more than
\begin{align*}
& \sup_{0<x< \alpha} \{1+2\sinn{\alpha/2}+2\sinn{x/2}+2\pi - \beta\} \\
&~~~~~~~~~~~~~~~= 1+4\sinn{\alpha/2}+2\pi - \beta \\
&~~~~~~~~~~~~~~~= c^1_0 + 4 \sinn{\alpha/2} - 2\sinn{\beta/2} 
\end{align*}

In the second subcase the 1st-jump landing point is in the fence and is illustrated in Figure~\ref{fig:optalp2}. It is not difficult to see that the worst configuration in this case is when the robot's basic landing point is arbitrarily close to the clockwise endpoint of the fence, while the treasure is arbitrarily close to the same endpoint and outside the fence. Clearly, the robot in that case traverses the whole circle, saving only an arc of length $\alpha$ which is jumped over using the corresponding chord of length $2\sinn{\alpha/2}$. Overall, the cost in this case becomes $1+2\pi - \alpha + \sinn{\alpha/2} $. This completes the proof of Lemma~\ref{lem: cost of 1-jump alg}.
\end{proof}

Critical to our analysis toward specifying the optimal choice of $\alpha$ is the solution to a specific equation that does not admit a closed form.  Consider expression $\alpha + 2\sinn{\alpha/2}$ which is monotonically increasing. As such, for every $\beta \in \reals$, the equation $\alpha + 2\sinn{\alpha/2} = \beta$ admits a unique solution in $\alpha$. Motivated by this observation we write that 
``$\alpha_\beta$ is the unique real number satisfying equation $\alpha_\beta + 2\sinn{\alpha_\beta/2} = \beta$''.
\ignore{ 
test[b_] := a /. FindRoot[ a + 2*Sin[a/2] == b, {a, 1}]
}

Moreover, since $\alpha + 2\sinn{\alpha/2}$ is increasing in the variable $\alpha$, so is $\alpha_\beta$ in the variable $\beta$. 
We are now ready to define and analyze the optimal 1-Jump Algorithm.

\begin{theorem}\label{thm: opt 1-jump}
Let $\gamma$ be the unique solution to equation $\pi=\gamma - \sinn{\gamma/2}$ ($\gamma \approx 4.04196$).
\ignore{
In[4]:= FindRoot[ Pi == x - Sin[x/2], {x, 4}]
Out[4]= {x -> 4.04196}
}
The optimal 1-Jump Algorithm chooses jump step corresponding to arc length $\alpha = \alpha_\beta$ if $\beta \leq \gamma$, $\alpha = 2\pi-\beta$ if $\beta > \gamma$
\ignore{$$
\alpha = 
\left\{
\begin{array}{ll}
\alpha_\beta& ~\mbox{if}~\beta \leq \gamma \\
2\pi-\beta& ~\mbox{if}~\beta > \gamma
\end{array}
\right.
$$}
and terminates in time 
\begin{equation}
1+ 
\left\{
\begin{array}{ll}
2\pi - \alpha_\beta + 2\sinn{\alpha_\beta/2}& ~\mbox{if}~\beta \leq \gamma \\
\beta+2\sinn{\beta/2}& ~\mbox{if}~\beta > \gamma.
\end{array}
\right.
\end{equation}
\end{theorem}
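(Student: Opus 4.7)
The plan is to treat the worst-case cost of Algorithm~\ref{alg:o1Jump} as a function
\[
C(\alpha) := \max\{c^1_0,\; c^1_1(\alpha),\; c^1_2(\alpha)\}
\]
of the chord-arc parameter $\alpha \in (0, \min\{\pi, 2\pi-\beta\}]$, and to minimize $C(\alpha)$ by exploiting the monotonic behavior of the three branches. First I would note that $c^1_0$ is constant in $\alpha$, that $c^1_1(\alpha)$ is strictly increasing in $\alpha$ on the feasible interval (its $\alpha$-derivative is $2\coss{\alpha/2}>0$ for $\alpha<\pi$), and that $c^1_2(\alpha)$ is strictly decreasing ($\alpha$-derivative $-1+\coss{\alpha/2}\le 0$ with equality only at $\alpha=0$). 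Hence the minimum of $\max\{c^1_1(\alpha),c^1_2(\alpha)\}$ over the feasible interval is achieved either at the unique interior balancing point $c^1_1=c^1_2$, or else at the right endpoint if that balancing point is infeasible.

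Next I would solve $c^1_1(\alpha)=c^1_2(\alpha)$ explicitly. A short computation gives $c^1_2(\alpha)-c^1_1(\alpha)=\beta-\big(\alpha+2\sinn{\alpha/2}\big)$. Since $\alpha+2\sinn{\alpha/2}$ is strictly increasing, the balancing equation has the unique solution $\alpha=\alpha_\beta$, matching the definition introduced just before the theorem. I would then check feasibility of $\alpha_\beta$: using the defining equation $\alpha_\beta+2\sinn{\alpha_\beta/2}=\beta$, the constraint $\alpha_\beta\le 2\pi-\beta$ rewrites (after substituting for $\beta$) as $\alpha_\beta+\sinn{\alpha_\beta/2}\le \pi$, whose boundary case corresponds exactly to $\alpha_\gamma=2\pi-\gamma$ with $\pi=\gamma-\sinn{\gamma/2}$. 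Because $\alpha_\beta$ is increasing in $\beta$ and $2\pi-\beta$ decreasing, this shows $\alpha_\beta\le 2\pi-\beta$ iff $\beta\le\gamma$; moreover $\alpha_\beta\le\alpha_\gamma=2\pi-\gamma<\pi$ throughout, so the other feasibility constraint is never binding.

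For the case $\beta\le\gamma$ I would set $\alpha=\alpha_\beta$ and substitute into $c^1_2$ to get the claimed value $1+2\pi-\alpha_\beta+2\sinn{\alpha_\beta/2}$. The one verification still needed is that $c^1_0$ does not dominate: using $2\sinn{\alpha_\beta/2}=\beta-\alpha_\beta$, one computes
\[
c^1_1(\alpha_\beta)-c^1_0 = 4\sinn{\alpha_\beta/2}-2\sinn{\beta/2}.
\]
Writing $u=\alpha_\beta/2$, the required inequality becomes $2\sinn{u}\ge \sinn{u+\sinn{u}}$; this follows on $u\in(0,\pi/2)$ from the subadditivity bound $\sinn{a+b}\le\sinn{a}+\sinn{b}$ on $[0,\pi]$ applied with $a=u$, $b=\sinn{u}$, together with $\sinn{\sinn{u}}\le\sinn{u}$ (monotonicity plus $\sinn{u}\le u$).

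For the case $\beta>\gamma$ the balancing $\alpha_\beta$ is infeasible, so by the monotonicity in Step 1 the optimum is at the boundary $\alpha=2\pi-\beta$; at this point $\alpha<\alpha_\beta$ so $c^1_2>c^1_1$, and plugging in yields $c^1_2=1+\beta+2\sinn{\pi-\beta/2}=1+\beta+2\sinn{\beta/2}$, matching the theorem. The inequality $c^1_0\le c^1_2$ reduces to $2\beta-2\pi\ge 0$, which holds since $\beta>\gamma>\pi$. I expect the only nonroutine step to be the subadditivity argument in Step 5a establishing that $c^1_0$ is dominated at the balanced $\alpha_\beta$; everything else is bookkeeping with the monotonicity facts and the defining equation of $\alpha_\beta$.
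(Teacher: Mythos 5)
Your proposal is correct and follows essentially the same route as the paper: balance the increasing branch $c^1_1$ against the decreasing branch $c^1_2$ to obtain $\alpha_\beta$, identify $\gamma$ as the value of $\beta$ at which $\alpha_\beta$ meets the feasibility bound $2\pi-\beta$, and fall back to the boundary jump $2\pi-\beta$ beyond that. The one place you go further than the paper is the explicit verification that $c^1_0\le c^1_1(\alpha_\beta)$ via the subadditivity bound $2\sinn{u}\ge \sinn{u+\sinn{u}}$ --- the paper asserts the domination of $c^1_0$ only on intuitive grounds --- and that added check is sound and welcome.
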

\begin{proof} 
By Lemma~\ref{lem: cost of 1-jump alg}, the optimal 1-Jump Algorithm is determined by choosing $\alpha$ that minimizes 
$$\sup_{0 < \alpha < \min \{ \pi, \pi - \beta\}} \{ c^1_0, c^1_1(\alpha), c^1_2(\alpha)\},$$
where in the expression above, we make the dependence on $\alpha$ explicit. 
Again, it should be clear that having the basic landing point outside the fence is a ``favorable event''. Intuitively, this is the only case that the robot makes full use of the fact that the treasure does not lie within the fence, jumping over it and using the corresponding chord. Effectively, this implies that for all $\beta, \alpha$ we have  $c^1_0\leq \min\{ c^1_1(\alpha), c^1_2(\alpha)\}$.

\ignore{ 
test[b_] := a /. FindRoot[ a + 2*Sin[a/2] == b, {a, 1}]
}

Next, for any $\beta \in (0, 2\pi)$ we need to choose $\alpha$ so as to minimize $\max\{c^1_1(\alpha), c^1_2(\alpha)\}$. To that end, note that  $c^1_1, c^1_2$ exhibit different monotonicities with respect to $\alpha$ so that, if possible, the minimum will be attained when the two costs are equal. 
Equating the two costs gives that $\alpha + 2\sinn{\alpha/2}=\beta$. Recall that we have denoted the unique solution to the equation by $\alpha_\beta$ which is increasing in $\beta$. Since the jump step needs to stay no more than $\min\{\pi,2\pi-\beta\}$, the choice $\alpha=\alpha_\beta$ (which is the best possible) is valid only when $\alpha_\beta \leq \min \{\pi, 2\pi-\beta\}$.  Numerically we can compute $\alpha_\pi \approx 1.66$, which due to the monotonicity of $\alpha_\beta$ implies that the dominant constraint is that $\alpha_\beta \leq 2\pi-\beta$, and hence any restrictions will be imposed for $\beta >\pi$. Indeed, setting $\alpha_\beta=2\pi-\beta$, and substituting in $\alpha_\beta + 2\sinn{\alpha_\beta/2}=\beta$ we obtain $2\pi-\beta + 2\sinn{\pi-\beta/2}=\beta$. The value of $\beta$ that satisfies this equation is $\gamma \approx 4.04196$.  

To resume, as long as $\beta \leq \gamma$, the best choice for the jump is the solution to the equation $\alpha + 2\sinn{\alpha/2}=\beta$. When $\beta > \gamma$, the best jump step is equal to $2\pi-\beta$. 

From the discussion above, the induced cost when $\beta \leq \gamma$ would be equal to $c^1_1(\alpha_\beta)$, as it reads in Lemma~\ref{lem: cost of 1-jump alg}. Finally, when $\beta>\gamma$ the induced cost would be
\begin{align*}
&\max \{ c^1_1(2\pi-\beta), c^1_2(2\pi-\beta)\} \\
&~~~= 1+ 2\pi  + \max \{ 4\sinn{\pi-\beta/2}-\beta,  2\sinn{\pi-\beta/2} - 2\pi +\beta \} \\
&~~~= 1+ 2\pi  +2\sinn{\beta/2} +  \max \{ 2\sinn{\beta/2}-\beta,  - 2\pi +\beta \} \\
&~~~= 1+ \beta +2\sinn{\beta/2}
\end{align*}
where the last equality is due to that $\beta \geq \gamma$, the definition of $\gamma$ and the fact that $-2\pi+\beta$ is increasing in $\beta$. This proves Theorem~\ref{thm: opt 1-jump}.
\end{proof}

Notably, the proof of Theorem~\ref{thm: opt 1-jump} suggests that for the best strategy $\alpha$ as a function of $\beta$, we have that $c^1_0\leq c^1_2(\alpha) \leq c^1_1(\alpha)$. This was expected, since having the basic landing outside the fence is intuitively more favourable than having it inside the fence and without needing the remedy phase, which is more favourable than needing the remedy phase. It is also interesting to note that for $\beta \leq \gamma$, the best jump choice $\alpha_\beta$ attains values close to $\beta/2$. This suggests an alternative approach to the problem that does not require the ability to solve technical trigonometric equations, and that will be explored later. Finally, there is a nice suggested recursive relation between costs $c^1_0, c^1_1$ that is soon to be generalized for k-Jump Algorithms.

\section{Generic Description of \alg{$k$}s}\label{sec: generic k-alg}
\label{sec:k-jump}

Analogously to the previous sections, the \alg{k} has parameters $\alpha_1, \ldots, \alpha_k$ and runs similarly to the \alg{(k-1)}, except from the case that the last landing point of the \alg{(k-1)} (which happens to be the (k-1)st-jump landing point, if this is realized) is within the fence. If that happens, the \alg{k} makes an additional counterclockwise jump corresponding to arc length $\alpha_k$. If the $k$th-jump landing point is in the fence, then it runs Algorithm~\ref{alg:0jumpa}. Otherwise, the $k$th-jump landing point is outside the fence,  and the robot applies the remedy phase from Algorithm~\ref{alg:o1Jump} in Section~\ref{sec: opt 1-jump}. 

\begin{algorithm}[H]
\caption{\alg{k}}\label{alg:kjumpA}
\begin{algorithmic}[1]
\State  {Walk to the perimeter of the disk}
\State { $i\leftarrow 0$}
\While{landing point is inside the fence \& $i<k$}
\State { $i\leftarrow i+1$}
\State {make a ccw chord jump of arc length $\alpha_i$}
\EndWhile
\State {Perform Algorithm~\ref{alg:0jumpa}}
\end{algorithmic}
\end{algorithm}

It is clear from the discussion above that any \alg{k} is specified by the jump steps $\alpha_1, \alpha_2, \ldots, \alpha_k$, where the $i$th jump is realized only if the basic landing point, along with the landing points of the previous $i-1$ jumps fall within the fence. In order to preclude the possibility that a jump passes over the area that holds the treasure and bring the robot back to the fence we require that $\alpha_i \leq \beta$. Moreover, for the jumps to be in counterclockwise direction (and to break the symmetry) we also require that $\alpha_i \leq \pi$.

Similarly, for the \alg{1} we required that $\alpha_1 \leq \min \{\pi, 2\pi-\beta\}$. However, according to Theorem~\ref{thm: opt 1-jump}, the optimal jump step is less than $\beta$ (for all $\beta$), meaning that the correctness condition for choosing the jump step could have been replaced by $\alpha_1 \leq \min \{\beta, 2\pi-\beta\}$. Indeed, our intuition tells us that an algorithm, which after the basic landing point within the fence makes a jump more than the length of the fence, will land outside the fence and subsequently will need unavoidably to apply the (suboptimal) remedy phase. Motivated by this observation, we require the following condition regarding the step sizes of \alg{k}'s:
$
\alpha_i \leq \min \left\{\pi, 2\pi-\beta \right\}, ~~i=1, \ldots, k.
$

The next lemma generalizes Lemma~\ref{lem: cost of 1-jump alg} and provides a handy recurrence description of the cost of the \alg{k} with jump steps $\alpha_1, \ldots, \alpha_k$ depending on the first landing point outside the fence. In this direction, we denote by $c^k_t$ to be the worst case cost of the \alg{k} when the basic landing point along with the landing points of the first $t-1$ jumps fall all inside the fence and the robot lands outside the fence in the the $t$th jump, which is shown in Figure~\ref{fig:multiple}.
Let us observe that, $c^k_0$ is the cost of the case when the basic landing point is outside the fence, while $c^k_{k+1}$ corresponds to the case that the landing points of all $k$ jumps, as well as the basic landing point, fall inside the fence. \\ 

\begin{figure}[!htb]
	\centering
	\includegraphics[height=7cm]{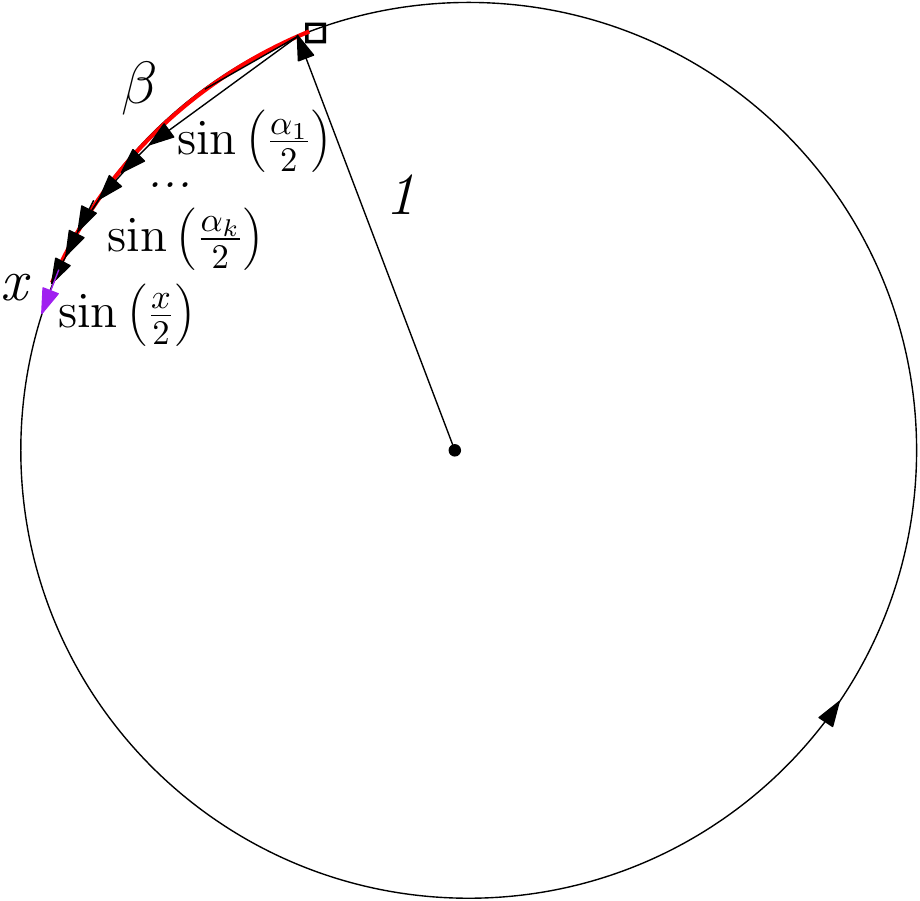}
	\caption{\alg{k}}
	\label{fig:multiple}
\end{figure}

\begin{lemma}\label{lem: cost of k-jump alg}
For any $\beta$, let $\alpha_0=\beta$. Depending on the landing points, the cost of the \alg{k} with jump steps $\alpha_1, \ldots, \alpha_k$ is 
\begin{equation}
\left\{
\begin{array}{ll}
c^k_0&:=~ 1+2\pi - \alpha_0 + 2\sinn{\alpha_0/2}\\
c^k_t&:=~ c^k_{t-1} + 4 \sinn{\alpha_t/2} - 2\sinn{\alpha_{t-1}/2} \\
c^k_{k+1}&:=~ 1+2\pi - \sum_{i=1}^k \left(\alpha_i- 2\sinn{\alpha_i /2} \right) \\ 
\end{array}
\right.
\end{equation}
with the understanding that $c^k_t$ are functions on $\beta$ and $\alpha_1, \ldots, \alpha_{t-1}$, for $t=1, \ldots k+1$. 
\end{lemma}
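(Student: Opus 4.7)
\medskip
\noindent\textbf{Proof proposal.} My plan is to mirror the structure of the proof of Lemma~\ref{lem: cost of 1-jump alg}, treating each of the three cases $c^k_0$, $c^k_t$ (for $1\le t\le k$), and $c^k_{k+1}$ separately by a direct worst-case analysis, and then obtaining the stated recurrence for $c^k_t$ from an explicit closed form.

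First I would dispense with $c^k_0$. By definition, this is the cost when the basic landing point falls outside the fence, in which case the while-loop in Algorithm~\ref{alg:kjumpA} is never entered and the execution is identical to Algorithm~\ref{alg:0jumpa} starting outside the fence. Lemma~\ref{lem: 0-jump performance} then immediately gives $c^k_0 = 1+2\pi-\beta+2\sinn{\beta/2}$, which matches the formula once we set $\alpha_0=\beta$.

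Next, for $c^k_t$ with $1\le t\le k$, I would fix the scenario: the basic landing and the first $t-1$ chord-jump landings all fall inside the fence, while the $t$-th landing lands outside. Writing out the geometry (as in Figure~\ref{fig:multiple}), the robot incurs: a cost of $1$ to reach the perimeter; a total chord cost of $\sum_{i=1}^t 2\sinn{\alpha_i/2}$ for the $t$ jumps; a remedy-phase cost of $x+2\sinn{x/2}$ where $x\in(0,\alpha_t)$ is the clockwise arc from the $t$-th landing to the (counterclockwise) fence endpoint it jumped over; and finally a search arc of length $2\pi-\beta-x$ when the adversary pushes the treasure arbitrarily close to the opposite fence endpoint. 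Summing and taking the supremum over the adversary's choices, exactly as in the single-jump case, the $x$'s cancel to yield the closed form
\begin{equation*}
c^k_t \;=\; 1+2\pi-\beta+4\sinn{\alpha_t/2}+2\sum_{i=1}^{t-1}\sinn{\alpha_i/2}.
\end{equation*}
The recurrence $c^k_t = c^k_{t-1}+4\sinn{\alpha_t/2}-2\sinn{\alpha_{t-1}/2}$ then follows by subtracting consecutive instances of this closed form, with the convention $\alpha_0=\beta$ making the $t=1$ case agree with $c^k_0$.

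Finally, for $c^k_{k+1}$, every jump landing (including the basic one) is inside the fence. The worst configuration, as in the analogous step of Lemma~\ref{lem: cost of 1-jump alg}, is for the basic landing to be arbitrarily close to the clockwise endpoint of the fence and the treasure to be arbitrarily close to that same endpoint but on the outside. In this configuration the robot essentially traverses the whole circle, except that the $k$ arcs of lengths $\alpha_1,\ldots,\alpha_k$ are replaced by the corresponding chords of lengths $2\sinn{\alpha_i/2}$, giving
\begin{equation*}
c^k_{k+1} \;=\; 1+2\pi-\sum_{i=1}^k\bigl(\alpha_i - 2\sinn{\alpha_i/2}\bigr),
\end{equation*}
as claimed. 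The step I expect to be the most delicate is the bookkeeping in the $c^k_t$ case: one must verify that the adversary's optimal choice of the landing offset $x$ really pushes $x\to \alpha_t$ (so that $2\sinn{x/2}\to 2\sinn{\alpha_t/2}$) and that the correctness condition $\alpha_i\le\min\{\pi,2\pi-\beta\}$ ensures the remedy phase and the subsequent search cover complementary arcs without double counting. Once this is checked, the recurrence drops out by a one-line subtraction.
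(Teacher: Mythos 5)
Your proposal is correct and follows essentially the same route as the paper: the same three-case decomposition, the same adversarial placements (basic landing near the counterclockwise fence endpoint, treasure just outside the clockwise endpoint), and the same accounting in which the remedy-phase arc $x$ cancels against the shortened search arc while $2\sinn{x/2}\to 2\sinn{\alpha_t/2}$. The only cosmetic difference is that you derive the explicit closed form $c^k_t = 1+2\pi-\beta+4\sinn{\alpha_t/2}+2\sum_{i=1}^{t-1}\sinn{\alpha_i/2}$ and subtract, whereas the paper obtains the recurrence by directly comparing the trajectories for $c^k_t$ and $c^k_{t-1}$ (noting the chord of length $2\sinn{\alpha_t/2}$ is traversed twice but that of length $2\sinn{\alpha_{t-1}/2}$ only once); both computations are identical in substance.
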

\begin{proof} 
As previously mentioned, when the basic landing point is outside the fence, the cost is indeed $c^k_0=c^0_0$. Furthermore, when all landing points, including the basic one, fall within the fence, then similarly to the cost $c^1_2$ of Lemma~\ref{lem: cost of 1-jump alg}, the worst positioning of the fence makes the basic landing point inside and arbitrarily close to the counterclockwise endpoint of the fence. Meanwhile, the treasure is arbitrarily close to the same endpoint but outside the fence. Effectively, the robot in this case will traverse the entire circle counterclockwise, saving from each jump exactly $\alpha_i-2\sinn{\alpha_i/2}$, $i=1, \ldots, k$. 

For the most interesting case, we need to compare the costs $c^k_t, c^k_{t-1}$, for some $t \in \{1, \ldots, k\}$. In both cases, the worst positioning of the treasure is arbitrarily close to the clockwise endpoint of the fence. The worst positioning of the basic landing point should bring the robot inside, as well as arbitrarily close to the counterclockwise endpoint of the fence, so as to induce the maximum possible remedy phase cost. Note that for the case of cost $c^k_t$, the robot traverses twice the chord of length $2\sinn{\alpha_t/2}$, but only once the chord of length $2\sinn{\alpha_{t-1}/2}$. Other than that, in both cases, the robot perform exactly the same jumps, and search exactly the same subperimeter of the circle.  This proves Lemma~\ref{lem: cost of k-jump alg}.
\end{proof}

\section{The Halving Heuristic \alg{$k$}}
\label{sec: halving algo}

In this section, we present a simple heuristic that is meant to approximate the optimal jump steps without relying on solutions of trigonometric optimization problems. Most importantly, our algorithm requires very limited memory and does not need to perform numerical operations other than simple algebraic manipulations. In fact, there are only constant many operations needed to determine every possible jump size. Moreover,  parameter $k$, i.e. the number of jumps, may not necessarily be determined in advance, and is allowed to be even infinite. First, we present the algorithm and analyze it. Then, in Section~\ref{sec: comparison}, we contrast it to the Optimal \alg{k} (for certain values of $k$). 

Closely examining the optimal solution for the \alg{1} in Section~\ref{sec: opt 1-jump}, we are tempted to choose an alternative first jump step equal to $\beta/2$, which is a good approximation to $\alpha_\beta$. This choice is valid, as long as the jump does not exceed $2\pi-\beta$, and indeed for large enough values of $\beta$, i.e. for $\beta \geq 4.041$, as per Theorem~\ref{thm: opt 1-jump}, the best choice for just one jump is $2\pi -\beta$. Note that changing the first jump from $\alpha_\beta$ to $\beta/2$ results to a new threshold value $\frac43\pi\approx 4.188$ after which the first jump should become $2\pi-\beta$. Interestingly, the pattern repeats also in the optimal \alg{k}s (see Section~\ref{sec: comparison}).

The previous observation suggests a natural heuristic for \alg{k}s. First, go to an arbitrary point on the circle. While in (some unknown position in) the fence, make a valid jump (i.e. no more than $\pi, 2\pi-\beta$) equal to half of the unexplored fence, unless this value exceeds $2\pi-\beta$ in which case the jump should be $2\pi-\beta$. Formally, the description of the heuristic follows if we can determine the length of the chord-jump $\alpha_i$ in every $i$-th jump, and then invoke Algorithm~\ref{alg:kjumpA}.

\begin{algorithm}[H]
\caption{Halving Heuristic jumps}\label{alg:hjumpA}
\begin{algorithmic}[1]
\State {$explored \leftarrow 0$}
\State{$temp  \leftarrow \frac{\beta - explored}{2}$}
\If{$temp \le 2\pi - \beta$}
\State{$\alpha_i \leftarrow temp$}
\Else
\State{$ jump \leftarrow 2\pi - \beta$}
\EndIf
\State{$explored \leftarrow explored + jump$}
\end{algorithmic}
\end{algorithm}

Note that the calculations of jumps $\alpha_i$ can be incorporated within Algorithm~\ref{alg:kjumpA} and do not need to be computed in advance. As the maximum number of jumps can be part of the input, Algorithm~\ref{alg:hjumpA} can be performed only for $k$ many landings within the fence (see Algorithm~\ref{alg:kjumpA}), or as long as the the jump step does not drop below a given threshold. Interestingly, the definition of step sizes on the fly by Algorithm~\ref{alg:hjumpA} evenn allows for $k=\infty$. That would correspond to the theoretical case that the robot makes an infinite number of jumps for which all landings happen within the fence. Still, the time for the robot to reach the endpoint of the fence would be finite (Zeno's paradox). 

The process above fully determines the jump step of the $t$-th jump as a function of $\beta$, for every $t=1,\ldots, k$, and for every $k$. In what follows we provide an analytic description of these values so that we can analyze the performance of the algorithm. The lemma below will allow us to derive later a nicer closed formula for the jump steps of the halving Algorithm. 

\begin{lemma}\label{lem: jump steps halving alg}
Let $h_t = \frac{2 \pi (t+1 )}{t+2}$ for $t\geq 1$ and $h_0=0$. For any $\beta \in (0, 2\pi)$, the value of the $i$-th jump in the Halving Algorithm equals
$$
\alpha_i =
\left\{
\begin{array}{ll}
\frac{j\beta - (j-1)2\pi}{2^{i-j+1}}&~\mbox{if}~\beta \leq h_i ~\mbox{and}~\beta \in (h_{j-1},h_j] \\
2\pi-\beta&~\mbox{if}~\beta > h_i \\
\end{array}
\right.
$$
\end{lemma}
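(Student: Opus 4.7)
The plan is to prove the formula by induction on $i$, tracking the cumulative \emph{explored} quantity (the running sum of jump lengths) and carefully splitting based on whether the halving step $(\beta - \text{explored})/2$ exceeds the correctness cap $2\pi - \beta$. The whole statement hinges on a single algebraic threshold comparison that I would isolate first.

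The pivotal fact I would establish at the outset is the equivalence
$$
\frac{\ell \beta - (\ell - 1)\,2\pi}{2} \;\le\; 2\pi - \beta \quad\Longleftrightarrow\quad \beta \;\le\; \frac{2\pi(\ell+1)}{\ell+2} \;=\; h_\ell,
$$
which follows by clearing denominators. This inequality encodes exactly when a halving step of ``size $(\ell\beta - (\ell-1)2\pi)/2$'' remains feasible versus when the cap $2\pi-\beta$ must be invoked at step $\ell$, assuming the previous $\ell-1$ steps have each used the cap.

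With this in hand, the induction handles the two regimes of the lemma simultaneously. Fix the index $j$ with $\beta \in (h_{j-1}, h_j]$. First I would verify, by a direct inductive argument on $k < j$, that each of the first $j-1$ jumps is capped, i.e.\ $\alpha_k = 2\pi-\beta$: after $k-1$ capped jumps one has $\text{explored} = (k-1)(2\pi-\beta)$ so $\text{temp} = (k\beta - (k-1)2\pi)/2$, and since $\beta > h_{j-1} \ge h_k$ for $k \le j-1$, the threshold inequality forces the else-branch. Observe that this already covers the ``$\beta > h_i$'' case of the lemma, because $\beta > h_i$ is equivalent to $i < j$. Then, for $k \ge j$, I would show by a second induction that $\alpha_k = (j\beta - (j-1)2\pi)/2^{k-j+1}$: summing the geometric series, after $k-1$ steps the explored quantity equals
$$
(j-1)(2\pi-\beta) + (j\beta - (j-1)2\pi)\bigl(1 - 2^{-(k-j)}\bigr) \;=\; \beta \;-\; \frac{j\beta - (j-1)2\pi}{2^{k-j}},
$$
so $\text{temp} = (j\beta - (j-1)2\pi)/2^{k-j+1}$; moreover this $\text{temp}$ is bounded above by the $k=j$ value, which is $\le 2\pi - \beta$ by the threshold inequality applied with $\ell = j$, so the if-branch is taken and $\alpha_k = \text{temp}$.

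The main obstacle is purely bookkeeping: aligning the index $j$ coming from the partition $(h_{j-1}, h_j]$ with the two regimes ``$\beta \le h_i$'' and ``$\beta > h_i$'' in the statement, and checking that once halving starts at step $j$ the cap is never re-triggered for any later step (which follows from monotonicity of $h_t$ in $t$ and the fact that later halved values are strictly smaller than the first one). Once the threshold inequality is stated cleanly and the telescoping identity above is verified, the two-case formula for $\alpha_i$ drops out without further work.
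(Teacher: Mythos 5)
Your proposal is correct and follows essentially the same route as the paper's proof: both derive the threshold $h_\ell$ from the comparison of the half-step with the cap $2\pi-\beta$, observe that the first $j-1$ jumps are capped, and then compute the halving phase by induction via the same geometric-series/telescoping identity for the explored length. The only point you leave implicit is the positivity $j\beta-(j-1)2\pi>0$ (needed for your claim that later halved values are smaller than the first, hence below the cap), but this follows at once from $\beta>h_{j-1}=\frac{2\pi j}{j+1}>\frac{2\pi(j-1)}{j}$, exactly as the paper checks at the end.
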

\begin{proof} 
We will derive the promised formulas from scratch, without relying on the statement of the lemma. First, note that the process above defines natural threshold values $h_i$ for $\beta$, after which the $i$-th jump step $\alpha_i$ becomes $2\pi-\beta$. In particular, the value of $\alpha_i$ will depend on which interval $(h_{j-1},h_j]$ value $\beta$ belongs to, where $j=1, \ldots k$, and with the understanding that $\alpha_i=2\pi-\beta$ if $\beta > h_i$. Therefore, it is natural to introduce notation 
$$
A_{i,j} := \alpha_i, ~~\mbox{when}~\beta \in (h_{j-1},h_j]
$$

It is easy to see that if $A_{i,j}=2\pi-\beta$ then $A_{r,j}=2\pi-\beta$ for all $r=1, \ldots, i-1$, and in general that $A_{i,j}=2\pi-\beta$ whenever $j\geq i+1$. In other words, $A(i,i)$ is the last expression (in $\beta$) for $\alpha_i$ before it becomes $2\pi-\beta$, while all previous jump steps should be equal to $2\pi-\beta$. Since at every step, the algorithm attempts a jump of half the unexplored fence, 
right before the $i$-th jump there has been explored a total of $(i-1)(2\pi-\beta)$ part of the fence. Hence, 
$$
A_{i,i}=\frac{\beta-(i-1)(2\pi-\beta)}{2} = \frac{i\beta - (i-1)2\pi}2
$$
The threshold $h_i$ is determined by requiring that $A_{i,i}\leq 2\pi-\beta$, from which we obtain that 
$$
h_i=\frac{2 \pi (i+1 )}{i+2}
$$
which is indeed increasing in $i$. 

Our next claim is that 
$$
A_{i,j} = \frac{j\beta - (j-1)2\pi}{2^{i-j+1}},~\mbox{for all}~i\geq j.
$$
The proof is by induction on $i-j$. 
Indeed, the claim is true when $i=j$. So assume that $i=j+t$ for some $t\geq 1$. The explored part of the fence up to the first $(i-1)$ jumps is equal to 
\begin{align*}
\sum_{r=1}^{i-1} A_{r,j} 
=& \sum_{r=1}^{j-1} A_{r,j} +  \sum_{r=j}^{j+t-1} A_{r,j}   \\
=& (j-1)(2\pi-\beta) + \sum_{r=j}^{j+t-1}\frac{j\beta - (j-1)2\pi}{2^{r-j+1}} \\
=& (j-1)(2\pi-\beta) + \left( 1- \frac1{2^t}\right) \left( 2\pi(j-1) - j\beta\right) \\
=&\frac{ \beta \left(2^t-j\right)+2 \pi  (j-1)}{2^t}.
\end{align*}

According to the Halving Algorithm, the $i$-th jump step will be exactly half of the unexplored fence, if that value does not exceed $h_j$. Indeed, the candidate step size is
\begin{align*}
\frac{\beta - \frac{ \beta \left(2^t-j\right)+2 \pi  (j-1)}{2^t}}{2}
&=
\frac{j \beta - (j-1)2\pi}{2^{t+1}}\\
&=
\frac{j \beta - (j-1)2\pi}{2^{i-j+1}} \\
&= A_{i,j}.
\end{align*}

Finally, for this jump to be valid, we need to show that $0<A_{i,j}\leq2\pi-\beta$. To that end, recall that $A_{i,j}$ corresponds to the $i$-th jump when $h_{j-1}< \beta \leq h_j$, i.e. when $\frac{2 \pi j}{j+1} < \beta \leq \frac{2 \pi (j+1 )}{j+2}$. Note that we are in the case where $i\geq j$, and so we have
$
\beta > \frac{2 \pi j}{j+1} > \frac{2 \pi (j-1)}{j}
$ and hence $A_{i,j}>0$. Also,
\begin{align*}
2\pi-\beta - A_{i,j} 
\geq & 2\pi - \frac{2 \pi (j+1 )}{j+2} - \frac{j \beta - (j-1)2\pi}{2^{i-j+1}} \\
\geq &  2\pi - \frac{2 \pi (j+1 )}{j+2} - \frac{j \beta - (j-1)2\pi}{2} \\
= &
\frac{j (2 \pi  (j+1)-\beta (j+2))}{2 (j+2)},
\end{align*}
which is again non negative, since $\beta \leq \frac{2 \pi (j+1 )}{j+2}$, as wanted. 
This proves Lemma~\ref{lem: jump steps halving alg}.
\end{proof}

We are now ready to present a closed formula for the jump steps of the Halving Algorithm along with its performance, as a function on the number of jumps.

\begin{theorem}\label{thm: simpler jump steps halving alg}
For any $\beta \in (0,2\pi)$, let $\rb := \max\left\{  \frac{2\beta-2\pi}{2\pi-\beta} , 1\right\}$. The value of the $i$-th jump ($i\geq 1$) in the Halving Algorithm equals
$$
\alpha_i =
\left\{
\begin{array}{ll}
2\pi-\beta&~\mbox{if}~i < \rb \\
\frac{2\pi-\crb(2\pi-\beta)}{2^{i-\crb+1}}&~\mbox{otherwise}
\end{array}
\right.
$$
\end{theorem}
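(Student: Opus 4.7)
The plan is to derive Theorem~\ref{thm: simpler jump steps halving alg} as a pure rewriting of Lemma~\ref{lem: jump steps halving alg} in which the sequence of thresholds $h_t$ is replaced by the single quantity $\rb$. Two correspondences need to be established: first, the condition ``$\beta > h_i$'' from the lemma must be recast as ``$i < \rb$''; second, in the complementary case, the unique index $j$ with $\beta \in (h_{j-1}, h_j]$ must be identified with $\crb$. Once both are in place, the two expressions for $\alpha_i$ will differ only by an elementary algebraic manipulation.

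For the first correspondence, I would start directly from $h_t = \frac{2\pi(t+1)}{t+2}$ and rewrite ``$\beta > h_t$'' as $\beta(t+2) > 2\pi(t+1)$, i.e. $t(2\pi - \beta) < 2\beta - 2\pi$; dividing by $2\pi - \beta > 0$ yields $t < \rho_\beta := \frac{2\beta - 2\pi}{2\pi - \beta}$. Hence for $i \geq 1$, the condition ``$\beta > h_i$'' is exactly ``$i < \rho_\beta$'', and no $i \geq 1$ can satisfy this when $\rho_\beta \leq 1$. The convention $\rb := \max\{\rho_\beta, 1\}$ is tailored precisely to absorb both regimes, so that across the full range $\beta \in (0, 2\pi)$ one has ``$\beta > h_i$'' iff ``$i < \rb$''.

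The second correspondence follows by applying the same computation to both endpoints of $(h_{j-1}, h_j]$: one gets $\beta \in (h_{j-1}, h_j]$ iff $\rho_\beta \in (j-1, j]$, hence $j = \lceil \rho_\beta \rceil$ whenever $\rho_\beta > 0$. For $\rho_\beta \leq 1$ we have $\beta \leq h_1 = \frac{4\pi}{3}$, which forces $j = 1$; in every case $j = \crb$. Substituting $j = \crb$ into the ``otherwise'' formula of Lemma~\ref{lem: jump steps halving alg} yields $\alpha_i = \frac{\crb \cdot \beta - (\crb - 1) \cdot 2\pi}{2^{i - \crb + 1}}$, and the one-line rearrangement $\crb \cdot \beta - (\crb - 1) \cdot 2\pi = 2\pi - \crb(2\pi - \beta)$ produces exactly the form in the statement.

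The proof is essentially bookkeeping, so the only real obstacle is handling the boundary behavior at $\rho_\beta \leq 1$ (equivalently $\beta \leq \tfrac{4\pi}{3}$), where no initial jumps of size $2\pi - \beta$ occur and the halving phase starts at the very first step. The $\max\{\cdot,1\}$ in the definition of $\rb$ is what unifies this case with the generic one, which is why I would single out the $\rho_\beta \leq 1$ regime explicitly when translating Lemma~\ref{lem: jump steps halving alg} rather than letting the reader infer it from the formulas.
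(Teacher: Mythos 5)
Your proposal is correct and follows essentially the same route as the paper: both proofs obtain the theorem by translating the threshold condition $\beta > h_i$ of Lemma~\ref{lem: jump steps halving alg} into $i < \rb$ and identifying the interval index $j$ with $\crb$, then rearranging $j\beta-(j-1)2\pi$ into $2\pi-\crb(2\pi-\beta)$. If anything, your treatment is slightly more careful than the paper's in spelling out why the $\max\{\cdot,1\}$ convention handles the regime $\rho_\beta\leq 1$ (where $\lceil\rho_\beta\rceil$ alone would fail to give $j=1$ for $\beta\leq\pi$).
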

\begin{proof} 
According to Lemma~\ref{lem: jump steps halving alg}, $\beta>h_i$ is satisfied as long as $i < \frac{2\beta-2\pi}{2\pi-\beta}$. Hence, if $i < \rb$ we have $\alpha_i = 2\pi-\beta$. For the same reason $\beta \leq h_i$ if and only if $i\geq \rb$, so $\rb$ is the smallest integer for which $\beta \leq h_{\rb}$, meaning that $\beta \in ( h_{\rb-1},  h_{\rb}]$. Therefore, again by Lemma~\ref{lem: jump steps halving alg}, we set $j=\crb$ (and rearrange the terms) to derive the promised formula. This proves Theorem~\ref{thm: simpler jump steps halving alg}.
\end{proof}

In Figure~\ref{fig: SomeHalvingJumps}, we depict the behaviour of the decreasing sequence $\alpha_i$ (in $i$) as a function of $\beta$ for a $i=1,2,3,4,5$ and $6$. Notably, for every $k$ there is some threshold value of $\beta$, after which $\alpha_i=2\pi-\beta$ for all $i \leq k$.  
\begin{figure}[h!]
	\centering
	\includegraphics[height=6cm]{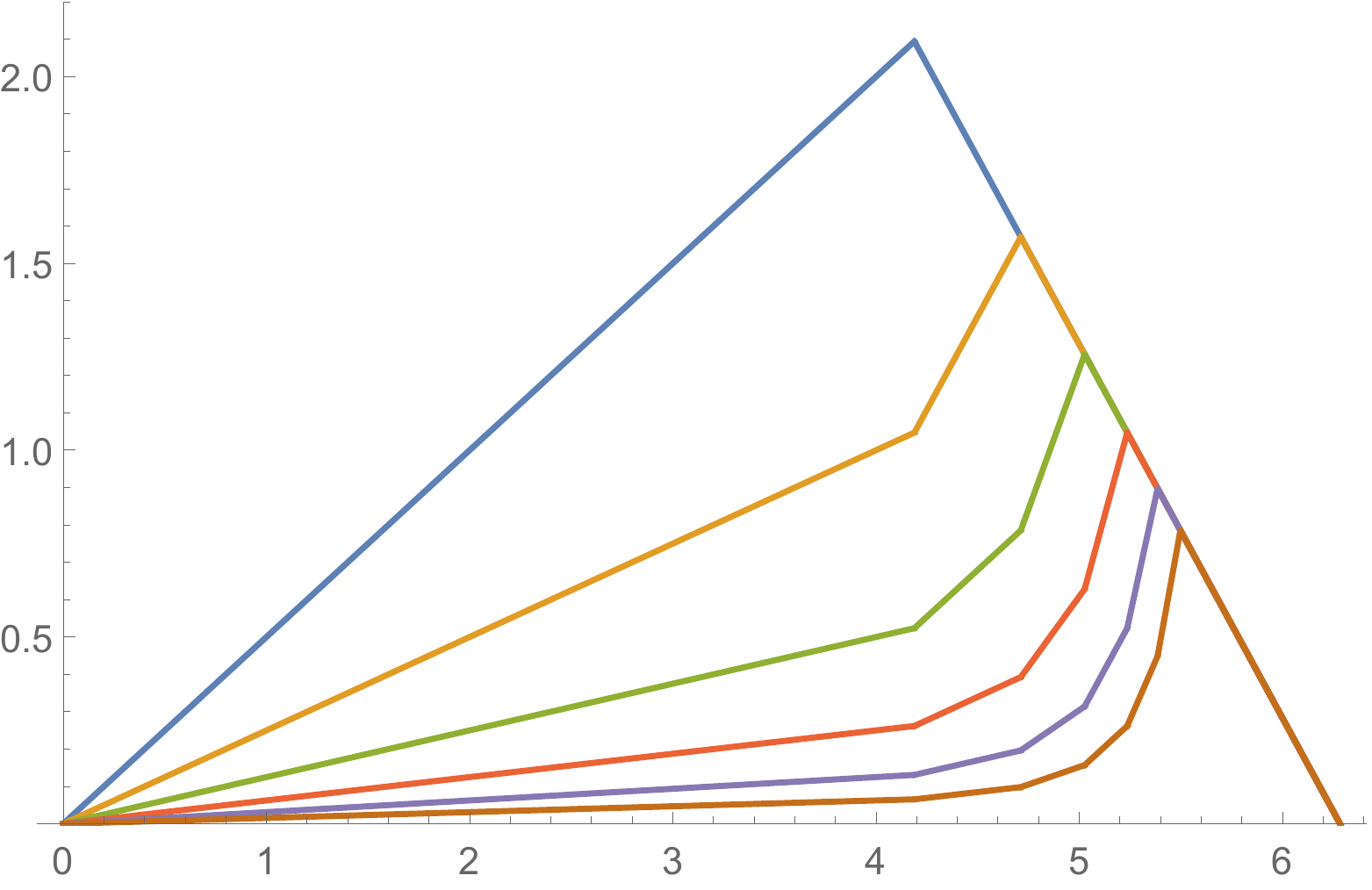}
	\caption{The plot jumps $\alpha_1\geq \alpha_2 \geq  \ldots \geq  \alpha_6$ of the Halving Algorithm as a function of $\beta$. }
	\label{fig: SomeHalvingJumps}
\end{figure}

Finally, we use the closed formula for the jump steps to derive a closed formula for the cost of the Halving \alg{k}. The main idea for the proof is to show that the worst case for the algorithm is when all $k$ jumps fall within the fence, and that the performance is strictly decreasing in $k$. This is what the next lemma establishes. 

\begin{lemma}\label{lem: worst case for halving algo}
The Halving $\alg{k}$ incurs the maximum possible cost when all jump landings (including the basic one) fall within the fence. 
\end{lemma}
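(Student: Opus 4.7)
The plan is to prove the equivalent statement that the sequence $c^k_0, c^k_1, \ldots, c^k_{k+1}$ is non-decreasing, from which the lemma follows immediately. By Lemma~\ref{lem: cost of k-jump alg}, the consecutive differences split into two forms: for $0 \leq t \leq k-1$ the recursive increment is $c^k_{t+1}-c^k_t = 4\sinn{\alpha_{t+1}/2} - 2\sinn{\alpha_t/2}$, while the final increment (obtained by telescoping $c^k_k$ from $c^k_0$ using the recurrence) is $c^k_{k+1}-c^k_k = \beta - \sum_{i=1}^{k}\alpha_i - 2\sinn{\alpha_k/2}$. The two parts must be handled separately.

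For the recursive increments it suffices to establish $2\sinn{\alpha_{t+1}/2} \geq \sinn{\alpha_t/2}$. Using Theorem~\ref{thm: simpler jump steps halving alg}, I would split into three subcases based on which regime each of steps $t, t+1$ is in.
\emph{(i)} Both jumps in the ``$2\pi-\beta$'' regime ($t+1 < \rb$): then $\alpha_t = \alpha_{t+1} = 2\pi-\beta$ and the inequality reduces to $\sinn{\alpha/2} \geq 0$. The boundary $t=0$ with $\beta \geq 4\pi/3$ is analogous since $\sinn{\alpha_1/2} = \sinn{\pi-\beta/2} = \sinn{\beta/2}$.
\emph{(ii)} Both jumps in the halving regime, so $\alpha_{t+1} = \alpha_t/2$; the double-angle identity and $\cos(\alpha_t/4) \leq 1$ give $\sinn{\alpha_t/2} = 2\sinn{\alpha_t/4}\cos(\alpha_t/4) \leq 2\sinn{\alpha_{t+1}/2}$. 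The boundary $t=0$ with $\beta \leq 4\pi/3$ (where $\alpha_1 = \beta/2$) also fits here.
\emph{(iii)} The transition step, where $\alpha_t = 2\pi-\beta =: \delta$ and $\alpha_{t+1} = (2\pi - \crb\,\delta)/2$. Using the identity $\crb + 2 = \lceil 2\pi/\delta\rceil$, one verifies $\delta < 2\pi - \crb\,\delta \leq 2\delta$, hence $\alpha_{t+1}/2 \in (\delta/4,\delta/2]$, which (combined with (ii)) yields $2\sinn{\alpha_{t+1}/2} > 2\sinn{\delta/4} \geq \sinn{\delta/2} = \sinn{\alpha_t/2}$.

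For the final increment I would argue that the ``unexplored portion of the fence'' $U_k := \beta - \sum_{i=1}^{k}\alpha_i$ is always at least $\alpha_k$. In halving mode ($k \geq \crb$), a straightforward induction on $i \geq \crb$ shows $U_i = \alpha_i$, because the halving rule replaces an unexplored length $L$ by a jump $L/2$, leaving $L/2$ unexplored. In the ``$2\pi-\beta$'' regime ($k < \rb$), $U_k = 2\pi - (k+1)\delta$, which is strictly greater than $\delta = \alpha_k$ since $(k+2)\delta < 2\pi$ by definition of $\rb$. Applying the elementary inequality $x \geq 2\sinn{x/2}$ (valid for $x \geq 0$) to $x = \alpha_k$ then gives $c^k_{k+1} - c^k_k \geq U_k - \alpha_k + \alpha_k - 2\sinn{\alpha_k/2} \geq 0$.

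The main obstacle I anticipate is the transition subcase (iii): the two jump lengths involved are governed by the two different formulas of Theorem~\ref{thm: simpler jump steps halving alg}, and bracketing $\alpha_{t+1}$ between $\delta/4$ and $\delta/2$ requires the elementary but somewhat delicate computation that relates $\crb$ to $\lceil 2\pi/\delta\rceil$. Once this is in hand, every other case reduces to either the double-angle formula or monotonicity of $\sinn{\cdot}$ on $[0,\pi/2]$ (note $\alpha_i \leq \min\{\pi, 2\pi-\beta\}$ guarantees all relevant arguments lie in this range).
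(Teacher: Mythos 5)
Your proposal is correct and follows the same overall strategy as the paper's proof: establish $c^k_0\le c^k_1\le\cdots\le c^k_{k+1}$ by examining consecutive differences, splitting according to whether the jumps involved lie in the $2\pi-\beta$ regime, the halving regime, or straddle the transition at $t=\crb$, and reducing everything to the elementary facts $2\sinn{x/4}\ge \sinn{x/2}$ and $x\ge 2\sinn{x/2}$. Two local differences are worth recording. First, your handling of the transition step is actually more reliable than the paper's: the paper lower-bounds $\sinn{\frac{\pi}{2}-\frac{\crb}{4}(2\pi-\beta)}$ by $\sinn{\frac{\pi}{2}-\frac{\rb}{4}(2\pi-\beta)}=\sinn{\beta/2}$, but since $\crb\ge\rb$ and both arguments lie in $[0,\pi/2)$ where $\sin$ is increasing, that inequality points the wrong way; your bracketing $\alpha_{\crb}/2\in\bigl((2\pi-\beta)/4,\,(2\pi-\beta)/2\bigr]$ combined with the double-angle bound $2\sinn{(2\pi-\beta)/4}\ge\sinn{(2\pi-\beta)/2}=\sinn{\beta/2}$ reaches the same conclusion $4\sinn{\alpha_\crb/2}\ge 2\sinn{\beta/2}$ by a valid route. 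Second, for the last increment $c^k_{k+1}-c^k_k$, your invariant that the unexplored fence satisfies $U_i=\alpha_i$ in halving mode (and $U_k>\alpha_k=2\pi-\beta$ when $k<\rb$) replaces the paper's explicit geometric-sum evaluation of $\beta-\sum_i\alpha_i$; the two computations are equivalent, but yours is cleaner and makes the role of the inequality $x\ge 2\sinn{x/2}$ transparent.
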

\begin{proof} 
For the values of $\alpha_i$ as defined in Theorem~\ref{thm: simpler jump steps halving alg}, we will show that the worst configuration is when all $k$ jump landings (together with the basic one) fall within the fence. In the language of Lemma~\ref{lem: cost of k-jump alg} we will show that $c^k_t < c^k_{t+1}$ for all $t=0, k$. Also note that $c^k_{k+1}$ is decreasing in $k$, in fact no matter what the jump steps are, since $x-\sinn{x/2}>0$, for all $x>0$, which concludes the lemma.

As already claimed, it is immediate that $c^k_0< c_i$, for all $i$, since the cost $c^k_0$ is incurred exactly when the basic landing falls outside the fence. Due to the fact that the robot has knowledge of the length $\beta$, the robot can fully avoid the fence by jumping over it. Next, according to Lemma~\ref{lem: cost of k-jump alg} we have that 
$$c^k_t - c^k_{t-1} = 4 \sinn{\alpha_t/2} - 2\sinn{\alpha_{t-1}/2},$$
for all $t=1, \ldots,k$. In particular, for $t< \rb$ the jump steps remain equal to $2\pi-\beta>0$, and hence $c^k_t - c^k_{t-1}>0$ for all $t< \rb$.

When $t=\crb$ we have
\begin{align*}
c^k_\crb - c^k_{\crb-1}
&= 4 \sinn{\alpha_\crb/2} - 2\sinn{(2\pi-\beta)/2} \\
&~~= 4 \sinn{\frac{\pi}2-\frac{\crb}{4}(2\pi-\beta)} - 2\sinn{\beta/2} \\
&~~\geq 4 \sinn{\frac{\pi}2-\frac{\rb}{4}(2\pi-\beta)} - 2\sinn{\beta/2} \\
&~~\geq 4 \sinn{\frac{\pi}2-\frac{\frac{2\beta-2\pi}{2\pi-\beta}}{4}(2\pi-\beta)} - 2\sinn{\beta/2} \\
&~~= 4 \sinn{\beta/2} - 2\sinn{\beta/2} \geq 0.
\end{align*}
When $t\geq \crb +1$, the jump steps drop by a factor of two in each iteration. Since for all $x>0$ we have that $2\sinn{x/4}-\sinn{x/2}>0$, we obtain easily that $c^k_t - c^k_{t-1}>0$, for $t= \rb+1, \ldots,k$.

Hence it remains to show that $c^k_{k+1}>c_k$. To that end assume that $k\geq \crb$. Then we have 
\begin{align*}
c^k_{k+1} - c^k_{k}
&=\beta-\sum_{i=1}^k \alpha_i - 2\sinn{\alpha_k/2} \notag \\
&=
\beta- \left( (\crb-1)(2\pi-\beta) + 
\sum_{i=0}^{k-\crb} \frac {\alpha_\crb}{2^i} \right) - 
2\sinn{\alpha_k/2} \notag \\
&=
\beta- \left( (\crb-1)(2\pi-\beta) + 
2\alpha_\crb \left(1-\frac1{2^{k-\crb+1}}\right) \right)  - 2\sinn{\alpha_k/2} \notag \\
&=
\beta- \left( (\crb-1)(2\pi-\beta) + 
2\big(\pi-\frac{\crb}2(2\pi-\beta) \right) \left(1-\frac1{2^{k-\crb+1}}\right) \big) - 2\sinn{\alpha_k/2} \notag \\
&=\frac{2\pi - \crb(2\pi-\beta)}{2^{k-\crb+1}}- 2\sinn{\alpha_k/2}
\label{equa: last case}.
\end{align*}

Recall that $\alpha_k=\frac{2\pi-\crb(2\pi-\beta)}{2^{k-\crb+1}}$, so the last expression is non negative since $x-\sinn{x/2}>0$ for all $x>0$. 
Finally, for the case that $k< \crb$, and since $k$ is an integer, we have $k\leq \rb$, and so 
\begin{align*}
c^k_{k+1} - c^k_{k}
&=\beta - k (2\pi-\beta) - 2\sinn{\beta/2}\\
&\geq \beta - \rb (2\pi-\beta) - 2\sinn{\beta/2}\\
&= 2\pi-\beta - 2\sinn{\beta/2}.
\end{align*}
The last expression is non negative since $\sinn{\beta/2}=\sinn{(2\pi-\beta)/2}$ and $x-2\sinn{x/2}>0$, for all $x>0$. 
\end{proof}

We are ready to conclude with the cost of the Halving \alg{k}.

\begin{theorem}\label{cost of halving algo k}
The cost of the Halving \alg{k} is strictly decreasing with $k$ and it equals 
\begin{align*}
&1+2\pi +\frac{\crb(2\pi-\beta)-2\pi}{2^{k-\crb+1}} + 2(\crb-1) \sinn{\beta/2}  + 
2\sum_{i=0}^{k-\crb} \sinn{\frac{\alpha_\crb}{2^{i+1}}},
\end{align*}
where $\rb = \max\left\{  \frac{2\beta-2\pi}{2\pi-\beta} , 1\right\}$ and $\alpha_\crb=\pi- \frac{\crb}2(2\pi-\beta)$.
\end{theorem}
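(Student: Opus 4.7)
The plan is to combine the worst-case reduction from Lemma~\ref{lem: worst case for halving algo} with the closed-form jump sizes from Theorem~\ref{thm: simpler jump steps halving alg}, and then simplify the resulting sums.

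First I would apply Lemma~\ref{lem: worst case for halving algo} to conclude that the worst-case cost of the Halving \alg{k} coincides with $c^k_{k+1}$, the cost incurred when the basic landing point and all $k$ jump landings fall within the fence. Lemma~\ref{lem: cost of k-jump alg} then gives the expression
\[
c^k_{k+1}\;=\;1+2\pi-\sum_{i=1}^{k}\alpha_i+2\sum_{i=1}^{k}\sinn{\alpha_i/2},
\]
which reduces the proof to an evaluation of these two finite sums under the explicit jump sizes from Theorem~\ref{thm: simpler jump steps halving alg}.

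Next I would split each sum at the threshold index $\crb$. For $1\leq i<\crb$, Theorem~\ref{thm: simpler jump steps halving alg} gives $\alpha_i=2\pi-\beta$, which contributes $(\crb-1)(2\pi-\beta)$ to $\sum_i \alpha_i$ and, using the identity $\sinn{(2\pi-\beta)/2}=\sinn{\beta/2}$, contributes $2(\crb-1)\sinn{\beta/2}$ to the sine part (accounting for the $\sinn{\beta/2}$ term in the claimed formula). For $i\geq\crb$, the same theorem together with $2\alpha_\crb=2\pi-\crb(2\pi-\beta)$ gives $\alpha_i=\alpha_\crb/2^{i-\crb}$, so the tail of the sine sum becomes $2\sum_{j=0}^{k-\crb}\sinn{\alpha_\crb/2^{j+1}}$, exactly as in the statement, while the tail of $\sum_i\alpha_i$ is a finite geometric series $\alpha_\crb\sum_{j=0}^{k-\crb}2^{-j}$ whose closed-form evaluation produces the fractional term $\frac{\crb(2\pi-\beta)-2\pi}{2^{k-\crb+1}}$ after collecting constants.

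For the strict monotonicity in $k$, I would exploit the fact that in the Halving Algorithm each jump size $\alpha_i$ depends only on $\beta$ and on $i$, not on the total number of jumps $k$. Hence passing from $k$ to $k+1$ leaves $\alpha_1,\ldots,\alpha_k$ unchanged and appends a strictly positive $\alpha_{k+1}$ (positivity is recorded in the proof of Lemma~\ref{lem: jump steps halving alg}). The expression from Lemma~\ref{lem: cost of k-jump alg} then shows that the worst-case cost changes by exactly $2\sinn{\alpha_{k+1}/2}-\alpha_{k+1}$, which is strictly negative because $x>2\sinn{x/2}$ for every $x>0$. The main obstacle is not conceptual but purely algebraic: the split-sum computation has to be executed carefully across the two regimes $i<\crb$ and $i\geq\crb$, and in particular the boundary case $\crb=1$ (i.e., $\beta\leq 4\pi/3$) in which the first block of equal jumps is empty and the geometric phase begins immediately with $\alpha_1=\alpha_\crb=\beta/2$.
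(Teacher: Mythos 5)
Your plan is essentially the paper's own proof: reduce to $c^k_{k+1}$ via Lemma~\ref{lem: worst case for halving algo}, substitute the closed-form jumps of Theorem~\ref{thm: simpler jump steps halving alg}, split both sums at the index $\crb$, and evaluate the geometric tail; your monotonicity argument (each additional jump changes the worst-case cost by $2\sinn{\alpha_{k+1}/2}-\alpha_{k+1}<0$) is the same observation the paper makes in one line, stated more carefully and with the correct direction.

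The one step that will not close as you describe is the claim that the geometric series ``produces the fractional term $\frac{\crb(2\pi-\beta)-2\pi}{2^{k-\crb+1}}$ after collecting constants.'' Carrying out the split honestly gives
\[
\sum_{i=1}^{k}\alpha_i \;=\;(\crb-1)(2\pi-\beta)+\alpha_\crb\left(2-2^{\crb-k}\right)\;=\;\beta-\frac{2\pi-\crb(2\pi-\beta)}{2^{k-\crb+1}},
\]
so the non-trigonometric part of $c^k_{k+1}=1+2\pi-\sum_i\alpha_i+2\sum_i\sinn{\alpha_i/2}$ comes out to $1+2\pi-\beta+\frac{2\pi-\crb(2\pi-\beta)}{2^{k-\crb+1}}$, which is not the displayed $1+2\pi+\frac{\crb(2\pi-\beta)-2\pi}{2^{k-\crb+1}}$: for $\beta=\pi$, $k=2$, $\crb=1$ the former gives $1+2\pi-\tfrac{3\pi}{4}$ while the latter gives $1+2\pi-\tfrac{\pi}{4}$, and a direct check ($\alpha_1=\pi/2$, $\alpha_2=\pi/4$) confirms the former. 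So either you must exhibit a cancellation you are implicitly assuming (there is none), or the target formula itself has to be corrected to carry the extra $-\beta$ and the opposite sign on the fraction. To be clear, this is a defect you share with the paper, whose proof inserts a spurious $-\beta$ into $c^k_{k+1}$ and then evaluates $\beta+\sum_i\alpha_i$ with a dropped $2\beta$; your strategy is the right one, but the final bookkeeping cannot be waved through, and executing it honestly forces a correction to the stated expression. Your handling of the two regimes, the identity $\sinn{(2\pi-\beta)/2}=\sinn{\beta/2}$, and the boundary case $\crb=1$ (where indeed $\alpha_1=\beta/2$) are all fine.
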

\begin{proof} 
Using the terminology of Lemma~\ref{lem: cost of k-jump alg}, and by Lemma~\ref{lem: worst case for halving algo}, the cost of the Halving \alg{k} equals 
$$
c^k_{k+1} = 1+2\pi - \beta - \sum_{i=1}^k \left(\alpha_i- 2\sinn{\alpha_i /2} \right) 
$$
where the jump steps are as determined in Theorem~\ref{thm: simpler jump steps halving alg}. From the expression above, it is immediate that the cost is strictly increasing in $k$, as long as all jump steps are positive. Next, we compute the summation in parts. We have, 

\begin{align*}
\beta +  \sum_{i=1}^k \alpha_i
&= \beta + \sum_{i=1}^{\crb-1} \alpha_i + \sum_{i=\crb}^{k} \alpha_i \\
&= \beta + (\crb-1) (2\pi-\beta)  + \sum_{i=0}^{k-\crb} \frac{\alpha_\crb}{2^i} \\
&= \beta + (\crb-1) (2\pi-\beta)  + \alpha_\crb \left( 2 - \frac1{2^{k-\crb}} \right) \\
&= \beta + (\crb-1) (2\pi-\beta)  + \left(\pi- \frac{\crb}2(2\pi-\beta)\right)  \left( 2 - \frac1{2^{k-\crb}} \right) \\
 & = \frac{\crb(2\pi-\beta)-2\pi}{2^{k-\crb+1}}.
\end{align*}

Finally, 
\begin{align*}
 \sum_{i=1}^k \sinn{\alpha_i/2}
&=
\sum_{i=1}^{\crb-1} \sinn{\alpha_i/2} + \sum_{i=\crb}^{k} \sinn{\alpha_i/2} =  (\crb-1) \sinn{\beta/2}  + \sum_{i=0}^{k-\crb} \sinn{\frac{\alpha_\crb}{2^{i+1}}}
\end{align*}
Putting the two expression together gives the promised cost. 
This proves Theorem~\ref{cost of halving algo k}.
\end{proof}

Figure~\ref{fig: SomeHalvingPerf} summarizes the cost of the Halving \alg{k} for $k=1,2,3$ and $4$. 

\begin{figure}[h!]
	\centering
	\includegraphics[height=5.5cm]{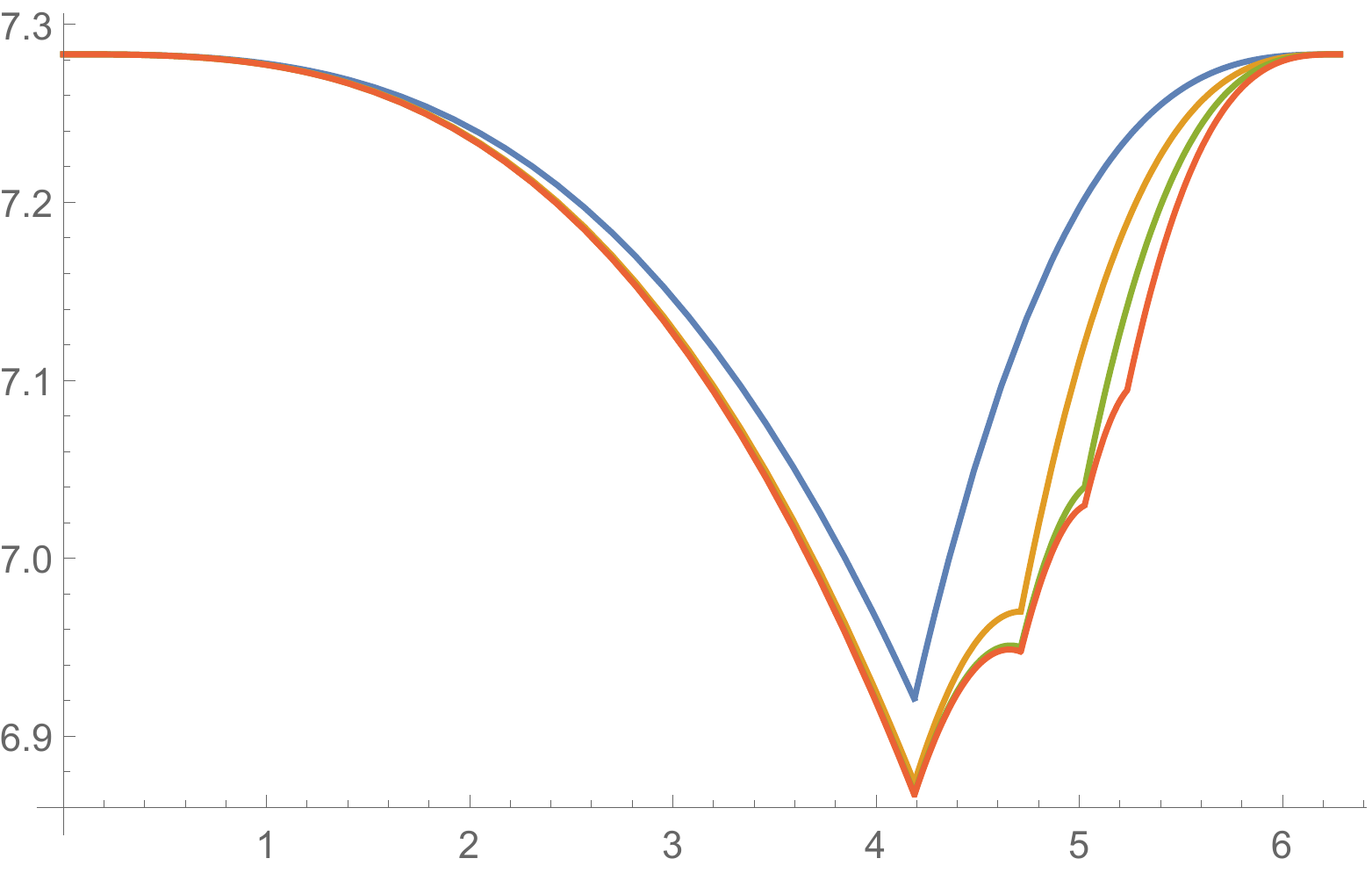}
	\caption{The performance of the Halving Algorithm for 1,2,3 and 4 jumps (decreasing in the number of jumps, respectively) as a function of $\beta$.}
	\label{fig: SomeHalvingPerf}
\end{figure}

\section{Some Optimal \alg{k}s \& Comparison}
\label{sec: comparison}

It is apparent from Lemma~\ref{lem: cost of k-jump alg} that choosing the optimal jump steps $\alpha_1, \ldots, \alpha_k$ amounts to solving the involved optimization problem $\min_{\alpha_1, \ldots, \alpha_k} \max_{t=1,\ldots,k+1}\{c^k_t\}$, where $\alpha_i \leq \min \left\{\pi, 2\pi-\beta \right\}$. In this section we contrast the choices and performance of the Halving \alg{k} with the choices and performance of the Optimal \alg{k} that was obtained using optimization software packages, for $k\leq 3$ (except from $k=1$ whose formal analysis appears in Section~\ref{sec: opt 1-jump}). 
Our findings are summarized in the figures below. 
\begin{figure}[htb!]
	\centering
	\includegraphics[height=6cm]{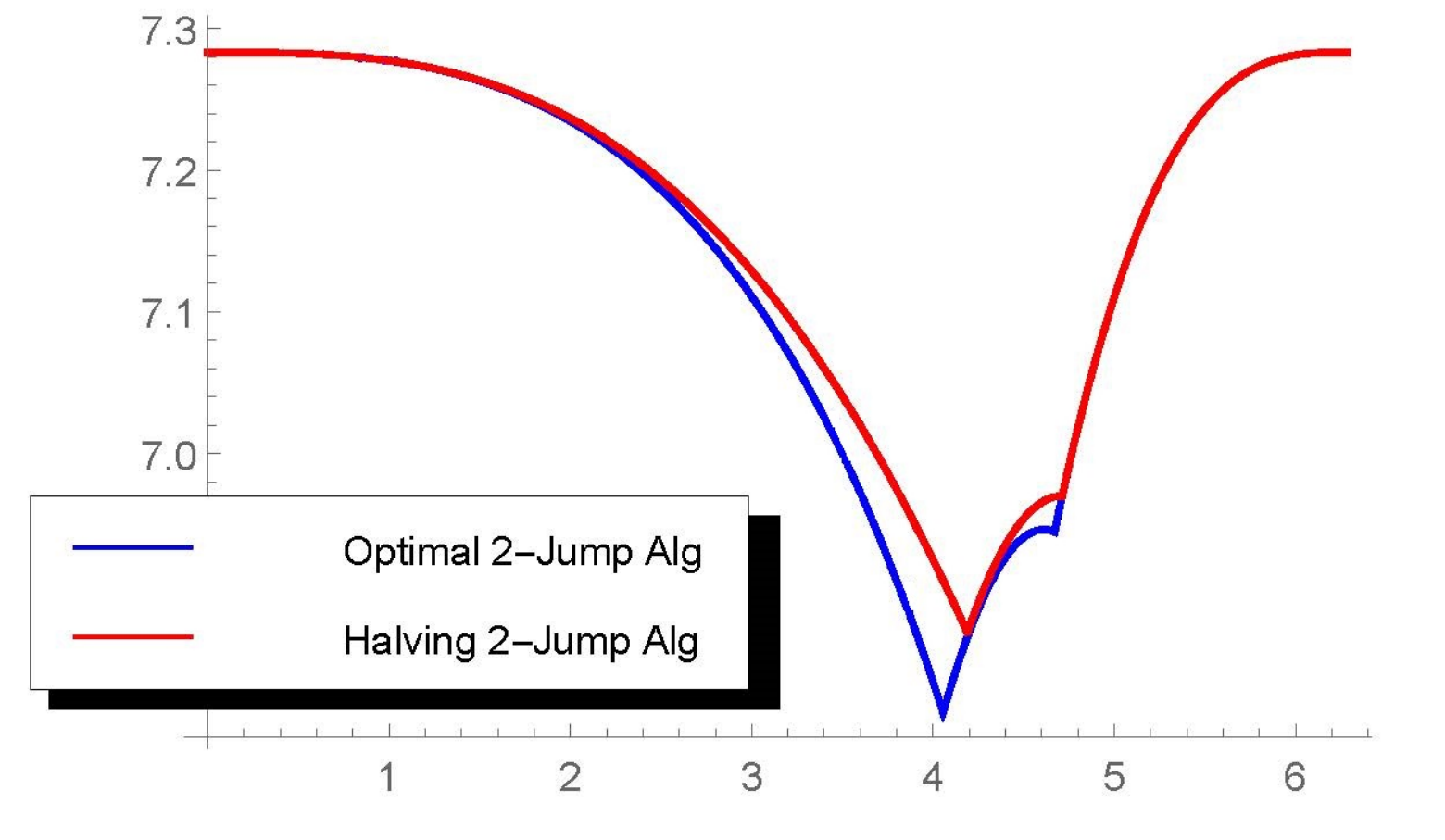}
	\caption{Performance comparison between the Optimal \alg{2} and the Halving \alg{2}, as a function of $\beta$.}
	\label{fig: 2jumpOptHalvPerformance}
\end{figure}

\begin{figure}[htb!]
	\centering
	\includegraphics[height=6cm]{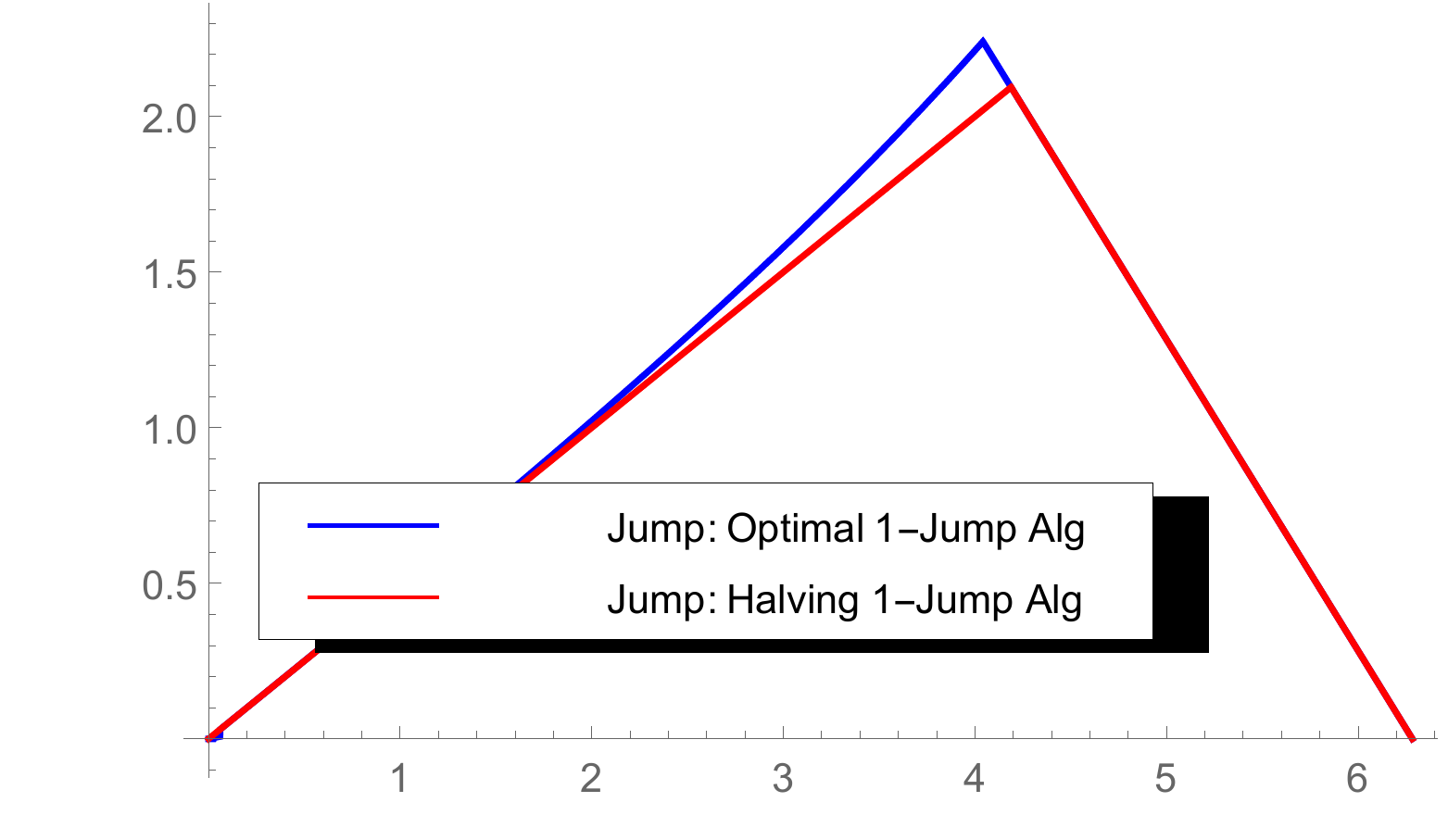}
	\caption{Comparison of jump choices between the Optimal \alg{1} and the Halving \alg{1}, as a function of $\beta$.}
	\label{fig: 1jumpOptHalvJumps}
\end{figure}

\begin{figure}[htb!]
	\centering
	\includegraphics[height=6cm]{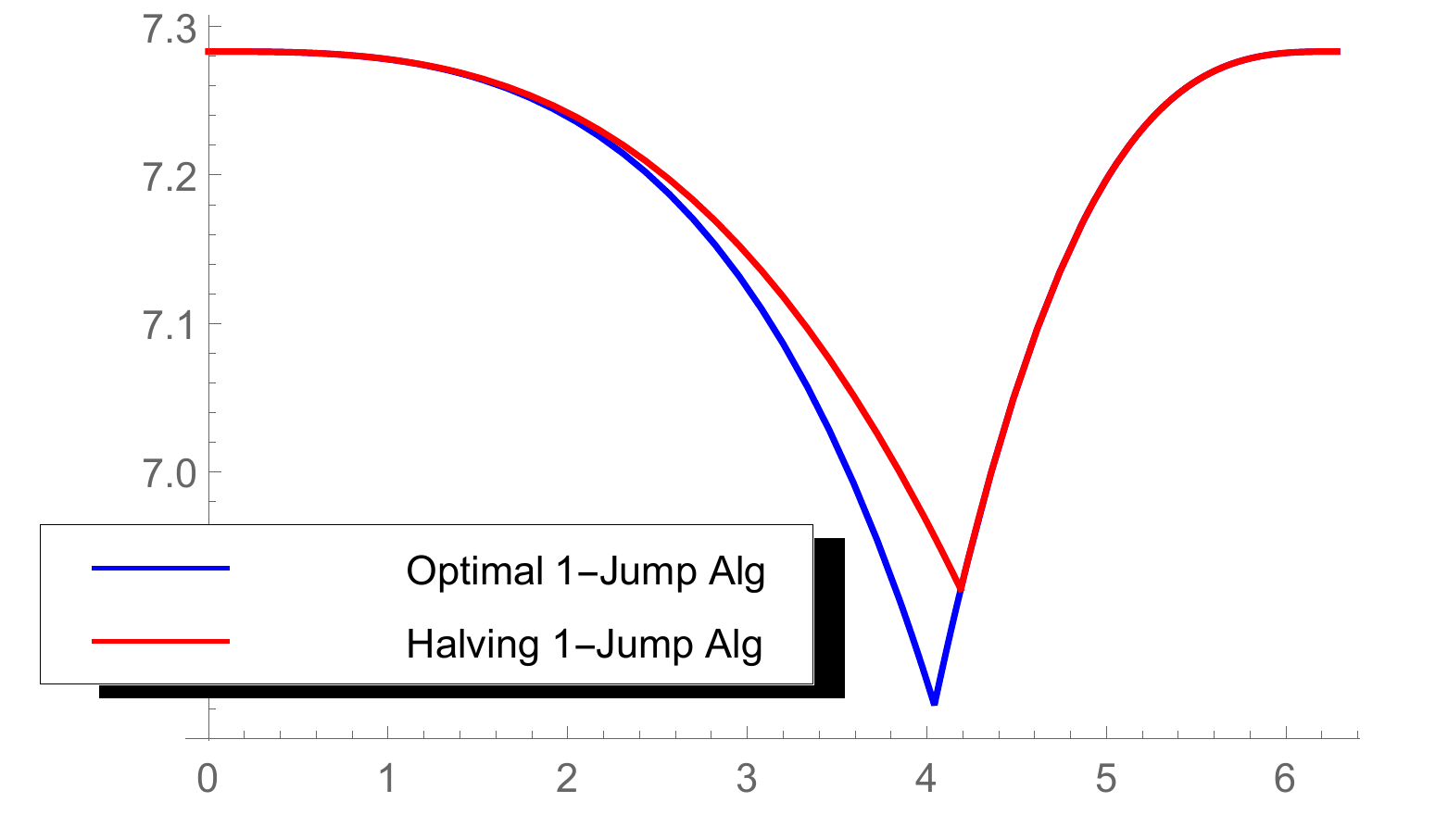}
	\caption{Performance comparison between the Optimal \alg{1} and the Halving \alg{1}, as a function of $\beta$.}
	\label{fig: 1jumpOptHalvPerformance}
\end{figure}

\begin{figure}[htb!]
	\centering
	\includegraphics[height=6cm]{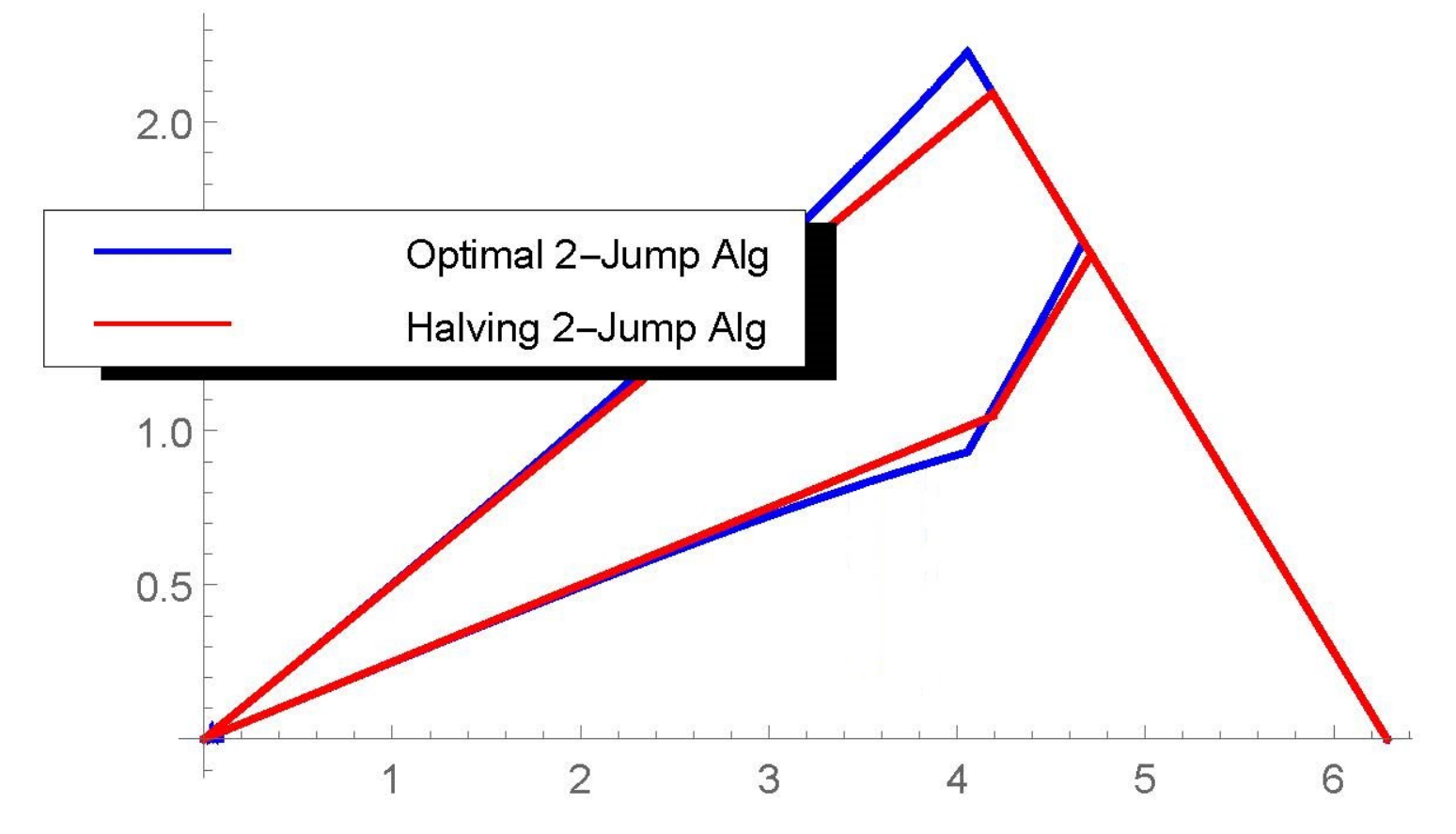}
	\caption{Comparison of the two jump choices between the Optimal \alg{2} and the Halving \alg{2}, as a function of $\beta$. The first jump of each algorithm is always no smaller than the second one, and eventually they all attain the value $2\pi-\beta$.}
	\label{fig: 2jumpOptHalvJumps}
\end{figure}

\begin{figure}[htb!]
	\centering
	\includegraphics[height=6cm]{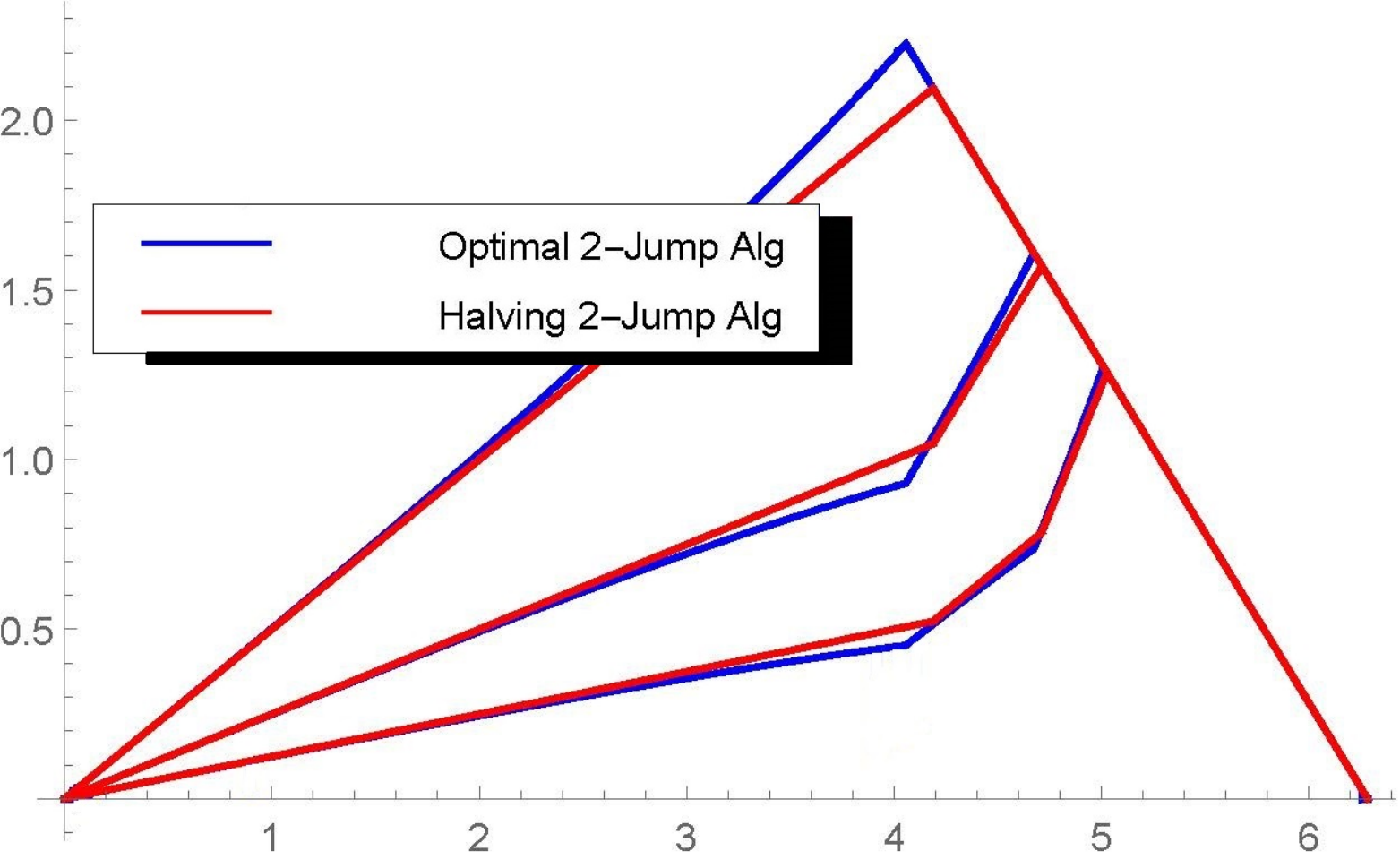}
	\caption{Comparison of the three jump choices between the Optimal \alg{3} and the Halving \alg{3}, as a function of $\beta$. For both algorithms, the first jump of each algorithm is always no smaller than the second one, which is no smaller than the third one. Eventually they all attain the value $2\pi-\beta$.}
	\label{fig: 3jumpOptHalvJumps}
\end{figure}

\begin{figure}[htb!]
	\centering
	\includegraphics[height=6cm]{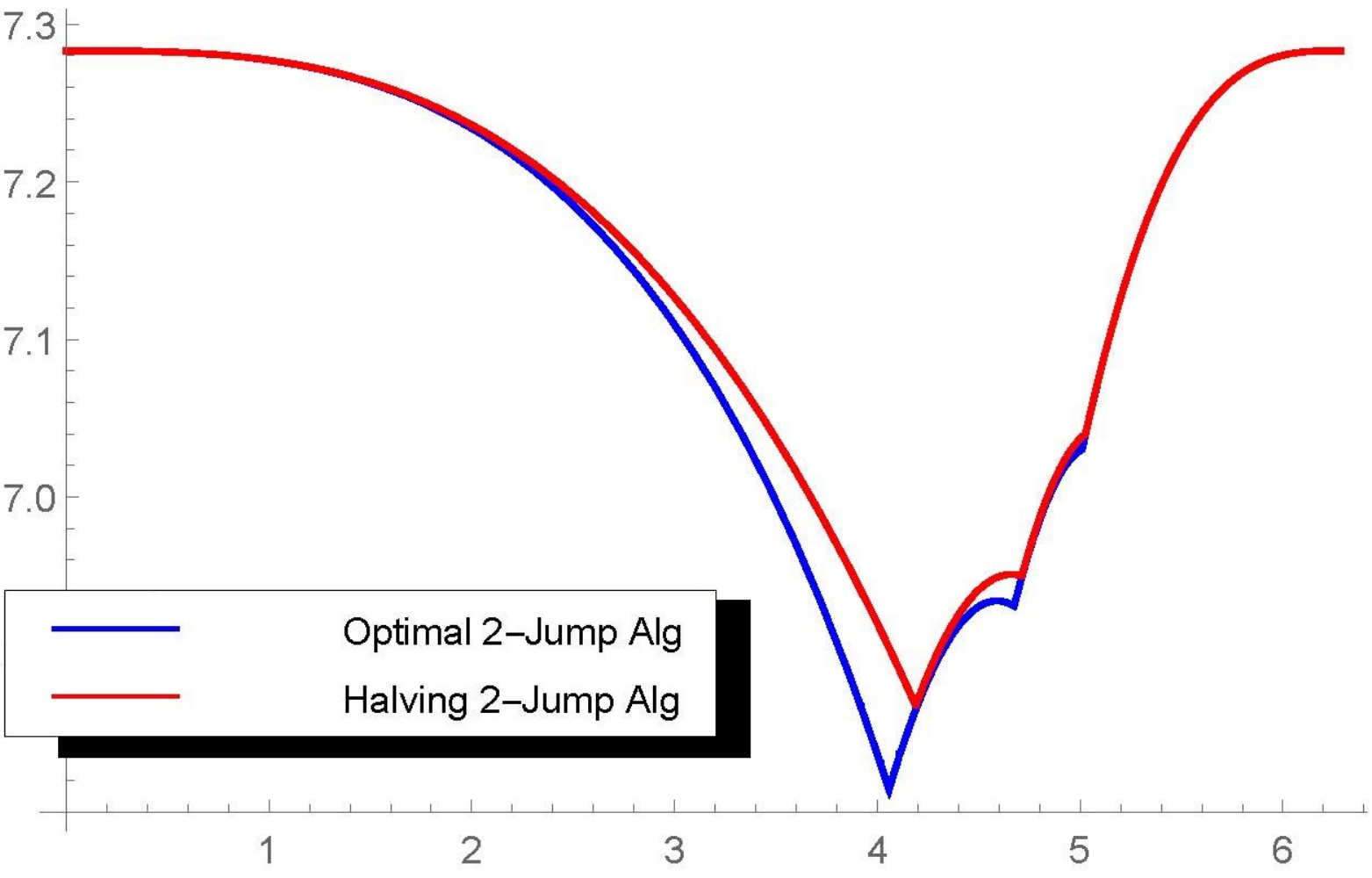}
	\caption{Performance comparison between the Optimal \alg{3} and the Halving \alg{3}, as a function of $\beta$. Notably, performance is nearly the same for all values of $\beta>5$. The bigger discrepancy is observed for values of $\beta$ close to 4, for which also the jump steps between the two algorithms exhibit the larger gaps (see Figure~\ref{fig: 3jumpOptHalvJumps}).}
	\label{fig: 3jumpOptHalvPerformance}
\end{figure}

For $k=1,2,3$ we numerically compute the optimal \alg{k} (note that for $k=1$ the rigorous analysis appears in Section~\ref{sec:1-jump}). Then, we contrast the performance of the optimal and of the Halving algorithm (for the same number of jumps), as well as contrasting their corresponding jump steps. The numerical calculations indicate that the choices of the Halving algorithm are nearly optimal for a wide spectrum of $\beta$ (with the largest discrepancy for $\beta \approx \gamma$). Interestingly, for larger values of $\beta$, the choices of the Halving algorithm are nearly optimal that also reflects on the cost of the two algorithms which becomes nearly identical. More importantly, experiments indicate that for large values of $\beta$, the optimal choices for $k$ jump steps is to make all equal to $2\pi-\beta$, which is also the choice of the Halving algorithm. 

\section{Conclusion}
\label{sec:conclusion}

In this paper we investigated a new search problem for a mobile robot to find a stationary target placed at an unknown location at distance $1$, in the presence of a fence placed on the perimeter of a unit disc. First we determined the optimal $1$-Jump algorithm for the robot to find the target. Then we provided a generic description of $k$-Jump algorithms and analyzed their cost depending on the jump landings. Subsequently we analyzed the Halving $k$-Jump algorithms, where $k$ is the max number of jumps the robot makes so as to overcome the fence and find the target. Several interesting questions remain open, when e.g., there are multiple fences on the perimeter of the disc, and the robot's speed changes when traversing a fence.
 
 
\small
\bibliographystyle{abbrv}
\bibliography{refs}

%
%


\end{document}